\tikzset{>={Stealth[width=3mm,length=3mm]}}
\newif\ifdraft
    \newcommand{\ja}[1]{\textcolor{purple}{JoakimA: #1}}
    \newcommand{\jatoah}[1]{\textcolor{purple}{JoakimA $\rightarrow$ AmitH: #1}}
    \newcommand{\jatoam}[1]{\textcolor{purple}{JoakimA $\rightarrow$ AmitM: #1}}
    \newcommand{\ah}[1]{\textcolor{orange}{AmitH: #1}}
    \newcommand{\ahtoja}[1]{\textcolor{orange}{AmitH $\rightarrow$ JoakimA: #1}}
    \newcommand{\ahtoam}[1]{\textcolor{orange}{AmitH $\rightarrow$ AmitM: #1}}
    \newcommand{\am}[1]{\textcolor{blue}{AmitM: #1}}
    \newcommand{\amtoah}[1]{\textcolor{blue}{AmitM $\rightarrow$ AmitH: #1}}
    \newcommand{\amtoja}[1]{\textcolor{blue}{AmitM $\rightarrow$ JoakimA: #1}}
    \newcommand{\as}[1]{\textcolor{olive}{AmitS: #1}}
    \newcommand{\ja}[1]{}
    \newcommand{\jatoah}[1]{}
    \newcommand{\jatoam}[1]{}
    \newcommand{\ah}[1]{}
    \newcommand{\ahtoja}[1]{}
    \newcommand{\ahtoam}[1]{}
    \newcommand{\am}[1]{}
    \newcommand{\amtoah}[1]{}
    \newcommand{\amtoja}[1]{}
    \newcommand{\as}[1]{}
\newcommand{\eqref}[1]{(\ref{#1})}
\newcommand{\Real}{{\mathbb R}}
\newcommand{\Natural}{{\mathbb N}}
\newcommand{\SO}{{\mathrm{SO}}}
\newcommand{\Expect}{\mathbb{E}}
\newcommand{\Var}{\mathrm{Var}}
\newcommand{\Cov}{\mathrm{Cov}}
\newcommand{\eye}{\mathrm{I}}
\newcommand{\transp}{\mathrm{T}}
\newcommand{\frob}{\mathrm{F}}
\newcommand{\euler}{\mathrm{e}}
\newcommand{\argmin}[1]{\operatorname*{arg\min}_{#1}}
\newcommand{\volscalar}{x}
\newcommand{\vol}{{\bf \volscalar}}
\newcommand{\im}{{\bf y}}
\newcommand{\dist}{\nu}
\newcommand{\noisescalar}{\varepsilon}
\newcommand{\noise}{\boldsymbol{\noisescalar}}
\newcommand{\proj}{\mathrm{P}}
\newcommand{\ctf}{{\bf h}}
\newcommand{\rot}{R}
\newcommand{\M}{\mathcal{M}}
\newcommand{\mean}{\boldsymbol{\mu}}
\newcommand{\cov}{\Sigma}
\newcommand{\meanest}{\hat{\mean}}
\newcommand{\noisestd}{\sigma}
\newcommand{\imsize}{N}
\newcommand{\imsizesub}{\check \imsize}
\newcommand{\imdim}{\imsize^2}
\newcommand{\covest}{\hat{\cov}}
\newcommand{\neig}{q}
\newcommand{\eigvest}{\hat{\mathrm{V}}_\neig}
\newcommand{\coords}{\boldsymbol{\beta}}
\newcommand{\estcoords}{\boldsymbol{\hat{\beta}}}
\newcommand{\coordsupp}{\mathbf{B}}
\newcommand{\volest}{\hat{\vol}}
\newcommand{\I}{\mathrm{I}}
\newcommand{\W}{\mathrm{W}}
\newcommand{\D}{\mathrm{D}}
\renewcommand{\L}{\mathrm{L}}
\newcommand{\geig}{\boldsymbol{\phi} \kern-0.65em\hat{\phantom{\boldsymbol{\phi}}}\phantom{}} % Graph Laplacian eigenvector
\newcommand{\geigs}{\hat{\phi}_s} % Graph Laplacian eigenvector
\newcommand{\meig}{\phi} % Manifold eigenfunction
\newcommand{\dvolscalar}{{\alpha}}
\newcommand{\dvolscalarest}{{\hat{\alpha}}}
\newcommand{\dvol}{{\boldsymbol{\dvolscalar}}}
\newcommand{\dvolest}{{\hat{\boldsymbol{\alpha}}}}
\newcommand{\wtsrhsest}{{\bf b}}
\newcommand{\wtskerest}{\mathrm{K}}
\newcommand{\bigO}{O}
\newcommand{\bigOprob}{O_{\mathrm{P}}}
\newcommand{\vu}{{\bf u}}
\newtheorem{theorem}{Theorem}
\newtheorem{assumption}{Assumption}
\newtheorem{corollary}{Corollary}
\newtheorem{remark}{Remark}
\newtheorem{lemma}{Lemma}
\begin{document}

\title[Cryo-EM reconstruction of continuous heterogeneity by Laplacian spectral volumes]{Cryo-EM reconstruction of continuous heterogeneity by Laplacian spectral volumes}
%\title[Mapping continuous molecular conformations with graph-based spectral volumes]{Mapping continuous molecular conformations with a graph-based
%spectral volume reconstruction}
%\title[Cryo-EM reconstruction of continuous heterogeneity by Laplacian spectral volumes]{Cryo-EM reconstruction of conformational manifolds %by Laplacian spectral volumes}

\author{Amit Moscovich$^{1*}$, Amit Halevi$^{1*}$, Joakim And\'en$^2$ and Amit Singer$^{1,3}$}

\address{$^1$ Program in Applied \& Computational Mathematics, Princeton University, Princeton, NJ}

\address{$^2$ Center for Computational Mathematics, Flatiron Institute, New York, NY}

\address{$^3$ Department of Mathematics, Princeton University, Princeton, NJ}

\address{$^{*}$ Equal contribution.}

\eads{\mailto{amit@moscovich.org}, \mailto{ahalevi@princeton.edu}, \mailto{janden@flatironinstitute.org} and \mailto{amits@math.princeton.edu}}

\begin{abstract}
    Single-particle  electron cryomicroscopy is an essential tool for high-resolution 3D reconstruction of proteins and other biological macromolecules.
    An important challenge in cryo-EM is the reconstruction of non-rigid molecules with parts that move and deform.
    Traditional reconstruction methods fail in these cases, resulting in smeared reconstructions of the moving parts.
    This poses a major obstacle for structural biologists, who need high-resolution reconstructions of entire macromolecules, moving parts included.
    To address this challenge, we present a new method for the reconstruction of macromolecules exhibiting continuous heterogeneity.
    The proposed method uses projection images from multiple viewing directions to construct a graph Laplacian through which the manifold of three-dimensional conformations is analyzed.
    The 3D molecular structures are then expanded in a basis of Laplacian eigenvectors, using a novel generalized tomographic reconstruction algorithm to compute the expansion coefficients.
    These coefficients, which we name \textit{spectral volumes}, provide a high-resolution visualization of the molecular dynamics.
    We provide a theoretical analysis and evaluate the method empirically on several simulated data sets.
\end{abstract}    

\vspace{2pc}
\noindent{\it Keywords}: single particle electron cryomicroscopy, heterogeneity, tomographic reconstruction, molecular conformation space, manifold learning, Laplacian eigenmaps, diffusion maps 

\maketitle

\section{Introduction} \label{sec:Introduction}

The function of biological macromolecules is determined not only by their chemical composition but also by their 3D configuration.
Hence, accurately estimating these configurations is of great importance to the field of structural biology.
Macromolecules may deform their structure, resulting in a continuum of possible configurations, known as \textit{conformations}.
It is not always possible to isolate different conformations and study each separately.
As a result, practitioners often image a heterogeneous sample which is then ``purified'' computationally.

While X-ray crystallography and nuclear magnetic resonance (NMR) spectroscopy have been very successful in imaging homogeneous molecular structures to high resolution, they rely on aggregate measurements over an entire sample and are therefore ill-suited for imaging heterogeneous molecules.
Single-particle electron cryomicroscopy (cryo-EM), on the other hand, produces a separate image for each individual molecule, opening the door to accurate determination of structural variability.

The cryo-EM process consists of rapidly freezing a solution containing the molecular sample and then imaging it using a transmission electron microscope.
The electron detector captures a movie where each frame records the electron counts for every pixel.
To reduce the effects of ionization damage---which destroys the fine structure of the molecules---the electron dose is kept low, resulting in
exceptionally noisy images.
See bottom row in Figure \ref{fig:FakeKVorig} for examples.
Since each particle is randomly oriented with respect to the imaging plane, the resulting image contains projections of molecules from many random viewing directions.
Each projection image is typically modeled as the line integral of the particle's electric potential along the viewing direction, followed by convolution with a point spread function and the addition of noise \cite{Frank2006,VulovicEtAl2013}.
The goal of cryo-EM reconstruction is to invert this process and obtain an estimate of the molecular  volume from its projection images.
In recent years, better sample preparation techniques and improved detectors have led to reconstructions at a near-atomic resolution \cite{Kuhlbrandt2014,AmuntsEtal2014,LiaoEtal2013,BartesaghiEtal2018}.

The standard assumption in 3D reconstruction by cryo-EM is that of a homogeneous sample with no structural variability.
Many methods for homogeneous 3D reconstruction have been proposed \cite{Frank2006,BarnettGreengardPatakiSpivak2017,ChengEtal2015,MilneEtal2012,VinothkumarHenderson2016}.
The prevalent methods are based on a Bayesian approach \cite{Scheres2012a} which starts from some initial guess for the volume and then performs expectation-maximization (EM), alternating between estimating a latent distribution of viewing directions for every image and estimating the volume given these distributions.
As discussed above, however, the homogeneous assumption does not hold in general.
Resolving molecular structures with variability is known as the \textit{heterogeneity problem} in single-particle cryo-EM.
Two types of heterogeneity are typically considered: discrete and continuous.

\begin{figure}
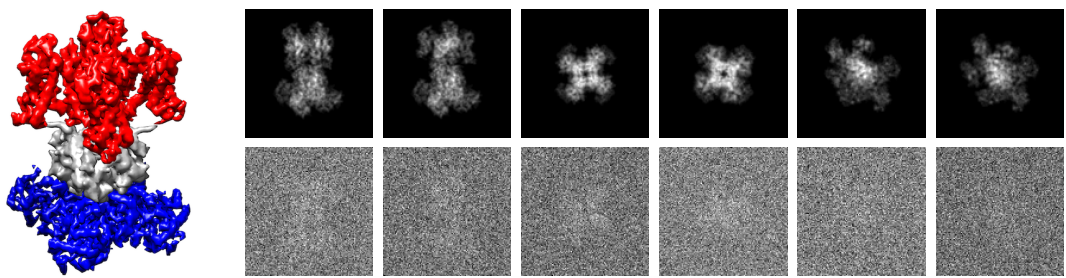

        \centering
        \qquad
        \quad\ 
        \begin{minipage}{0.20\linewidth}
            \includegraphics[height=36mm]{\detokenize{FakeKVorig}}
        \end{minipage}
        \begin{minipage}{0.7\linewidth}
            \includegraphics[width=17mm]{\detokenize{fig_im_clean_11}}
            \includegraphics[width=17mm]{\detokenize{fig_im_clean_21}}
            \includegraphics[width=17mm]{\detokenize{fig_im_clean_top_1}}
            \includegraphics[width=17mm]{\detokenize{fig_im_clean_top_2}}
            \includegraphics[width=17mm]{\detokenize{fig_im_clean_110}}
            \includegraphics[width=17mm]{\detokenize{fig_im_clean_210}}
                        
            \vspace{1.1mm}
            \includegraphics[width=17mm]{\detokenize{fig_im_noisy_11}}
            \includegraphics[width=17mm]{\detokenize{fig_im_noisy_21}}
            \includegraphics[width=17mm]{\detokenize{fig_im_noisy_top_1}}
            \includegraphics[width=17mm]{\detokenize{fig_im_noisy_top_2}}
            \includegraphics[width=17mm]{\detokenize{fig_im_noisy_110}}
            \includegraphics[width=17mm]{\detokenize{fig_im_noisy_210}}
        \end{minipage}
        \caption{\label{fig:FakeKVorig} \small
        The potassium ion channel used to simulate a heterogeneous molecular ensemble.  Note the $C_4$ rotational symmetry.
        (left) surface plot of the 3D density of a single molecule.
        We generated two synthetic datasets: \textsf{ChannelSpin} where the top red part is randomly rotated around the z axis (the molecule's axis of symmetry), and \textsf{ChannelStretch} where the bottom blue part is stretched along the x-y plane;
        (right) two different conformations from \textsf{ChannelSpin} projected along three orientations, from left to right: side view,  top view, and oblique view.
        The top row contains clean projections whereas the bottom row contains corresponding CTF-filtered projections with noise added.
        }
\end{figure}

\subsection{Discrete heterogeneity}
This is perhaps the simplest model for heterogeneity.
In this model, it is assumed that the particles in the sample can be approximated by a finite number of fixed volumes.
An example of a molecule that may be effectively modeled in this way is ATP synthase, an enzyme that acts as a molecular stepper motor and spends most of its time in one of three rotation angles \cite{YasudaEtal1998}.

Several software packages support reconstruction with discrete heterogeneity, also known as 3D classification in the cryo-EM community.
These include RELION \cite{Scheres2012b}, cryoSPARC \cite{PunjaniEtal2017}, FREALIGN/cisTEM \cite{LyumkisEtal2013,GrantRohouGrigorieff2018}, and EMAN2 \cite{TangEtal2007}.
To recover $K$ distinct volumes, these methods assign, for each particle image, a latent distribution over the set $\{1, \ldots, K\}$.
This is incorporated into the EM algorithm, similar to the latent distribution over the viewing directions.

\subsection{Continuous heterogeneity}

In this model, the molecular volumes in the sample vary continuously subject to the many constraints due to molecular bonds.
If  the number of degrees of freedom associated with the flexible motion is small then the space of molecular volumes forms a low-dimensional manifold (up to thermal vibrations).
Figure \ref{fig:FakeKVorig} shows a simple molecular model with continuous heterogeneity that we use in our simulations.
Here, the continuous motion is the free rotation of the top part around the vertical axis.
In this case, the manifold of molecular volumes is diffeomorphic to the unit circle $S^1$.

One approach for analyzing structural heterogeneity is to perform principal component analysis (PCA) of the 3D molecular structures represented as densities on an $\imsize \times \imsize \times \imsize$ voxel grid.
This idea goes back to \cite{LiuFrank1995} and was further developed by \cite{Penczek2002,PenczekEtal2006,PenczekFrankSpahn2006,PenczekKimmelSpahn2011,LiaoFrank2010}.
These methods estimate the $N^3 \times N^3$ covariance matrix of the 3D volumes and compute its leading eigenvectors, known as \textit{eigenvolumes}.
One variant relies on a consistent least-squares estimator for the covariance \cite{KatsevichKatsevichSinger2015,AndenKatsevichSinger2015,AndenSinger2018}.
These methods may capture continuous heterogeneity---as illustrated by \cite{AndenSinger2018}---but are ill-suited for high-resolution reconstruction, as we discuss in Section \ref{sec:LowResHeterogeneity}.
A notable exception is the method proposed in \cite{TagareEtal2015} that attempts to directly compute the leading eigenvectors, at high resolution, without estimating the entire covariance matrix.

A different approach is taken in \cite{DashtiEtAl2014,SchwanderFungOurmazd2014,FrankOurmazd2016,DashtiEtal2018} and is based on diffusion maps, a non-linear dimensionality reduction method that is well-suited for recovering low-dimensional manifold structure \cite{CoifmanEtal2005,CoifmanLafon2006}.
These methods first cluster the projection images by their viewing direction and then compute a separate low-dimensional embedding for each cluster.
All of these different embeddings are then aligned, yielding a global embedding of conformations.
Sets of close points in the global embedding may then be used to reconstruct a 3D volume corresponding to a particular conformation.
This approach faces two important challenges: first, unsupervised global registration of the embeddings is by itself a very challenging problem \cite{WangMahadevan2009,CuiChangShanChen2014}; second, each individual embedding uses only a small subset of images from a particular viewing direction, which may be insufficient for accurate manifold recovery.

The RELION software package has also been recently extended to include multi-body refinement \cite{NakaneEtal2018}.
This method takes a segmentation of a 3D molecular reconstruction and attempts to refine each part separately from a static base model, with independent viewing directions and shift parameters for each part.
Multi-body refinement, however, is limited to rigid variability and may fail to accurately reconstruct the interface between moving parts.

Other methods have been proposed based on normal mode analysis of the molecular structure reconstruction \cite{JinEtal2014,SchilbachEtal2017}.
However, the underlying harmonic oscillator model used in these methods may be too simple to describe sophisticated continuous variability such as structural deformations.
See \cite{SorzanoEtal2019} for a survey of methods for studying continuous heterogeneity using cryo-EM.

\subsection{Our contribution}

\begin{figure}
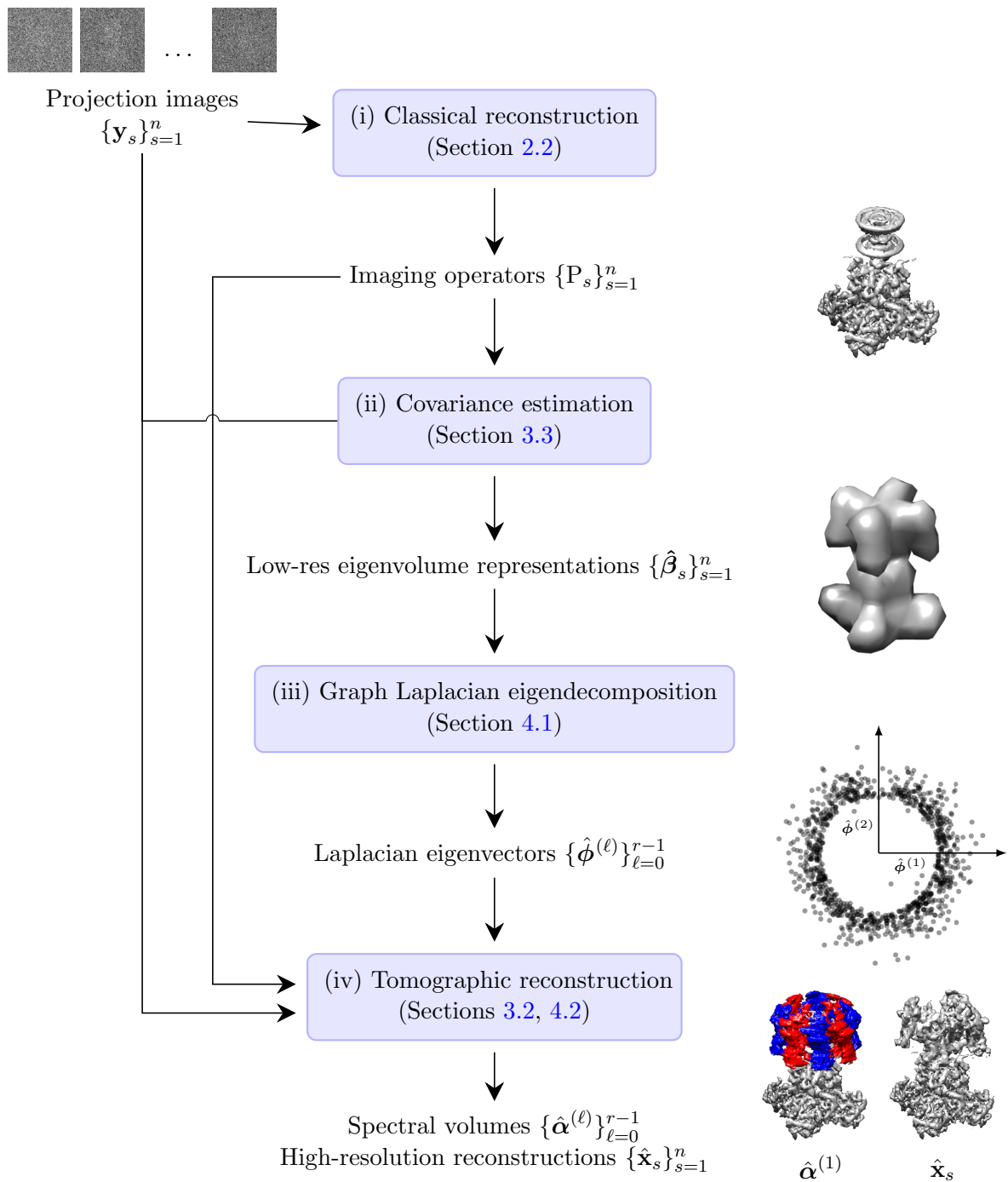
 \small
    \centering
    \begin{tikzpicture}[semithick,block/.style ={rectangle, draw=blue!30, thick, rounded corners, fill=blue!10,  align=center,  minimum height=3.5em},]
        \node (INPUT) [align=center]  at (-5.5,-2.0) { Projection images\\\( \{ \im_s \}_{s=1}^n \) };
        \node (INPUTIMG) at (-5.5,-1) { \includegraphics[width=10mm]{\detokenize{fig_im_noisy_110}}
            \includegraphics[width=10mm]{\detokenize{fig_im_noisy_21}}
            \(\begin{array}{c} \ldots \\ \ \end{array}\)
            \includegraphics[width=10mm]{\detokenize{fig_im_noisy_210}} };

        \node[block] (HOMORECONSTRUCTIONS) [align=center]  at (0,-2.25) {\ (i) Classical reconstruction \ \\(Section \ref{sec:InverseProblem})};
        \node (MEANVOL) [align=center]  at (6,-4.5) {\includegraphics[width=28mm]{\detokenize{spinning_alpha_1_pic}}};
    
        \node(IMAGINGOPERATORS) [align=center] at (0,-4.5) {Imaging operators \( \{ \proj_s \}_{s=1}^n
\)};
        \node(LEFTOFIMAGINGOPERATORS) at (-5,-4.5) {};

        \node (COVEST)[block] [align=center]  at (0,-6.75) {\ (ii) Covariance estimation \ \\(Section \ref{sec:LowResHeterogeneity})};

        \node (COVARIANCE) [align=center] at (0,-9) { Low-res eigenvolume representations \( \{ \estcoords_s \}_{s=1}^n \) };
        \node (LOWRESIMG) at (6,-9) { \includegraphics[width=24mm]{\detokenize{recon_cov_crop}} };
        \node (DMAPCOORDS) at (6,-13.5) { \includegraphics[width=40mm]{\detokenize{mani_example_circle}} };
        
        \draw[-latex] (6,-13.5)--(8,-13.5);
        \draw[-latex] (6,-13.5)--(6,-11.5);
        \node (PHI1) at (6.5,-13.7) {  \tiny \( \geig^{(1)} \) };
        \node (PHI2) at (5.68,-13.1) { \tiny \( \geig^{(2)} \) };

        \node[block,align=center] (GRAPHCOMPUTATIONS)  at (0,-11.25) {\ (iii) Graph Laplacian eigendecomposition \ \\(Section \ref{sec:GraphComputations})};

        \node[align=center] (LAPLACIANEIGENVECTORS)  at (0, -13.5) {Laplacian eigenvectors \( \{ \geig^{(\ell)} \}_{\ell=0}^{r-1}  \)  };

        \node[block,align=center] (RECONSTRUCTION)  at (0,-15.75) {\ (iv) Tomographic  reconstruction\ \\ (Sections \ref{sec:GenTomographicReconstruction},
\ref{sec:SpecVolEst})};

        \node (RECONSTRUCTEDVOLS) [align=center] at (0,-18) { Spectral volumes \( \{\dvolest^{(\ell)} \}_{\ell=0}^{r-1} \) \\ High-resolution reconstructions
\( \{ \volest_s \}_{s=1}^n \) };
        \node (HIGHRESIMG) at (5.0,-16.9) { \includegraphics[width=24mm]{\detokenize{spinning_alpha_3_pic}}};
        \node (HIGHRESRECON) at (7,-16.8) { \includegraphics[width=17.5mm]{\detokenize{recon_25000_r_15_crop}}};

        \node (LABELALPHA1) at (5.1,-18.4) {\( \dvolest^{(1)} \)};
        \node (LABELXRECON) at (7.0,-18.45) {\( \volest_s \)};
        
        \draw[->,shorten >=5pt] (INPUT) -- (HOMORECONSTRUCTIONS);
        \draw[->,shorten <=5pt] (HOMORECONSTRUCTIONS) -- (IMAGINGOPERATORS);
        \draw[->,shorten >=5pt] (IMAGINGOPERATORS) -- (COVEST);
        \path[->,shorten >=5pt,name path=line 1] (INPUT) |-  (COVEST);
        \draw[->,shorten <=5pt] (COVEST) -- (COVARIANCE);
        \draw[->,shorten >=5pt] (COVARIANCE) -- (GRAPHCOMPUTATIONS);
        \draw[->,shorten <=5pt] (GRAPHCOMPUTATIONS) -- (LAPLACIANEIGENVECTORS);
        \draw[->,shorten >=5pt] (LAPLACIANEIGENVECTORS) -- (RECONSTRUCTION);
        \draw[->] (INPUT) |- (-3.1,-16);
        \draw[->,shorten <=5pt] (RECONSTRUCTION) -- (RECONSTRUCTEDVOLS);
        \draw[->,name path=line 2] (IMAGINGOPERATORS) -- (-4.4,-4.5) |- (-3.1,-15.55);
        
        % The code below handles the little half circle in the line intersection.
        % Taken from http://www.texample.net/tikz/examples/line-junction/
        % Radius for arc over intersection
        \def\radius{1.mm} 
        % find intersection of first and second line
        \path [name intersections={of = line 1 and line 2}];
        \coordinate (S)  at (intersection-1);
        % path a circle around this intersection for the arc
        \path[name path=circle] (S) circle(\radius);
        % find intersections of second line and circle
        \path [name intersections={of = circle and line 1}];
        \coordinate (I1)  at (intersection-1);
        \coordinate (I2)  at (intersection-2);
        % draw arc at intersection
        \tkzDrawArc[color=black](S,I1)(I2);
        \draw (INPUT) |- (I2);
        \draw (I1) -- (COVEST);
    \end{tikzpicture}
    \caption{\small High-level diagram of our method, illustrated on the \textsf{ChannelSpin} dataset. (i) Classical single-particle reconstruction, to obtain estimates of the CTF and viewing directions. (ii) Covariance estimation of the 3D density. The eigenvectors of the covariance matrix are then used to form a low-resolution 3D reconstruction from each projection image. (iii) Using the low-resolution reconstructions we build an affinity graph and compute its Laplacian eigenvectors.  (iv) We expand the unknown volumes in a basis of Laplacian eigenvectors and perform tomographic reconstruction. The result is  $r$ spectral volumes (left, overlaid on the mean image) which define  a high-resolution reconstruction for each projection image (right). \label{fig:pipeline}} 
\end{figure}

We present a new method for recovering continuous variability based on manifold learning.
In contrast to the viewing-direction specific manifold estimates of \cite{DashtiEtAl2014}, our method directly approximates the global manifold of conformations from all projection images, regardless of their viewing direction.

Throughout this paper, we identify molecular volumes with their electric potential sampled on a 3D voxel grid of dimension $\imsize^3$.
Under the continuous heterogeneity model, a single molecule corresponds to an embedded submanifold of $\Real^{\imsize^3}$. This manifold is the range of a smooth function that maps a set of conformation parameters to a volume.
A standard technique for approximating smooth functions on manifolds is by series expansion in Laplacian eigenfunctions.
This technique generalizes the familiar Fourier series expansion in Euclidean space.
However, to apply it we need to have the Laplacian eigenfunctions.
This is a ``chicken and egg'' problem: The computation of the Laplacian eigenfunctions requires the distribution of 3D volumes, which is the very thing we would like to estimate.
To resolve this problem, we use the covariance-based approach \cite{AndenSinger2018} to obtain low-resolution estimates of the 3D volumes.
These reconstructions are then used to form an empirical graph Laplacian whose $r$ eigenvectors with lowest eigenvalues \( \geig^{(0)},  \ldots, \geig^{(r-1)} \in \Real^n\) are used in lieu of the unknown Laplacian eigenfunctions.
Then we compute a set of expansion coefficient vectors \( \dvolest^{(0)}, \ldots, \dvolest^{(r-1)} \in \Real^{\imsize^3} \), which we refer to as \textit{spectral volumes}.
Together, they define a high-resolution 3D reconstruction $\volest_s$ for each projection image:
\begin{eqnarray}
    \volest_s := \sum_{\ell=0}^{r-1} \dvolest^{(\ell)} \geigs^{(\ell)},
\end{eqnarray}
To compute the expansion coefficients \( \{\dvolest^{(\ell)}\}_{\ell=0}^{r-1} \), we formulate a novel generalized tomographic reconstruction problem posed as a 3D deconvolution,
similar to \cite{WangShkolniskySinger2013}.
The convolution kernel is computed efficiently using a non-uniform fast Fourier transform (NUFFT) \cite{DuttRokhlin1993,GreengardLee2004} and
the solution is computed using the conjugate gradient method, leveraging the fast Fourier transform (FFT) for the application of the convolution.
These computational details are key for scaling up to high-resolution.
See Figure \ref{fig:pipeline} for a diagram of the main steps which constitute our method.
\begin{remark}
    The eigenvectors of the Laplacian can be used not only for function representation but also for non-linear dimensionality reduction (e.g. \cite{BelkinNiyogi2003,CoifmanEtal2005}).
    In our case, they define an embedding of the low-resolution reconstructions that is useful for visualizing the underlying manifold     of conformations.
    In Figure \ref{fig:ManiEmbeddings} we show the two-dimensional embedding of the {\normalfont \textsf{ChannelSpin}} dataset using the second and third eigenvectors. Despite the high noise levels, the underlying circular manifold of motions is recovered.
\end{remark}
\begin{remark}
    The spectral volumes have the same dimensionality as the high-resolution volumes that we reconstruct. They  may therefore be visualized as 3D molecular volumes, albeit with negative values as well as positive.
    These visualizations provide insight regarding the range of motions of the molecule.
    See Figure \ref{fig:RotSpecVols} for examples and Section \ref{sec:Theory} for an asymptotic analysis of the spectral volumes.
\end{remark}
\noindent Section \ref{sec:ProblemFormulation} defines the forward model and formulates the inverse problem for continuous heterogeneity in cryo-EM.
We describe our method in Section \ref{sec:Methods}, including the generalized tomographic reconstruction from noisy projection images.
Section \ref{sec:Algorithms} outlines the algorithms used and their computational complexity.
In Section \ref{sec:Theory} we prove the convergence of the spectral volumes and high-resolution reconstructions under the manifold assumption.
Finally, we present results on synthetic datasets in Section \ref{sec:Results}.

\begin{table} \small
    \caption{\small List of symbols. Scalars are denoted by italics, vectors by boldface letters, matrices by non-italicized capitals, estimators
are decorated with a hat.} \label{table:notation}
    \begin{center}
        \begin{tabular}{lll}
            \bf Name & \bf Domain & \bf Description\\\hline
            $n$ & $\Natural$ & Number of images and underlying molecular volumes \\
            $s$ & $1,\ldots,n$ & Index to molecular image/volume \\
            $\imsize$ & $\Natural$ & Image/volume size\\
            $\imsizesub$ & $\Natural$ & Downsampled image/volume size \\
            $\vol, \vol_s$ & $\Real^{\imsize^3}$ & Molecular volume\\
            $\volest_s$ & $\Real^{\imsize^3}$ & Our high-resolution molecular volume estimate \\
            ${\bf u}$ & $\{1,\ldots,N\}^3$ & Voxel index\\
            $\im, \im_s$ & $\Real^{\imsize^2}$ & Molecular image\\
            $\ctf, \ctf_s$ & $\Real^{\imsize^2}$ & Contrast transfer function (CTF)\\
            $\rot, \rot_s$ & $\SO(3)$ & 3D viewing orientation \\
            $\proj, \proj_s$ & $\Real^{\imsize^2 \times \imsize^3}$ & Imaging matrix (rotation, projection, and CTF) \\
            $\mathcal{F}_d$ &  & The $d$-dimensional discrete Fourier transform\\
            $\mean$ & $\Real^{\imsize^3}$ or $\Real^{\imsizesub^3}$ & Mean volume (high-res or low-res)\\
            $\cov$ & $\Real^{\imsizesub^3  \times \imsizesub^3}$ & Covariance matrix of downsampled molecular volumes\\
            $\neig$ & $\mathbb{N}$ & Number of PCA eigenvolumes\\
            $\eigvest$ & $\Real^{\imsizesub^3 \times \neig}$ & Eigenvolumes of the estimated covariance matrix\\
            $\coords(\vol), \coords_s$ & $\Real^q$ & PCA coordinates of a molecular volume\\
            $\coordsupp$ & $ \subseteq \Real^q $ & The domain of PCA coordinates\\
            $\dist(\coordsupp)$ & & Measure of volumes in PCA coordinate representation\\
            $\mathrm{W}$ & $\Real^{n \times n}$& Edge weights matrix\\
            $\L$ & $\Real^{n \times n}$& Graph Laplacian matrix\\
            $\M$ & $\subset \Real^{\imsize^3}$ & Riemannian submanifold of  molecular volumes\\
            $\meig^{(\ell)}$ &$\coordsupp \to \Real$& Laplace--Beltrami eigenfunction of the $\ell$\textsuperscript{th} smallest eigenvalue\\
            $\geig^{(\ell)}$ & $\Real^n$ & Laplacian eigenvector of the $\ell$\textsuperscript{th} smallest eigenvalue\\
            $r$ & $\Natural$ & Number of spectral volumes\\
            $\wtskerest$ & \( \mathbb{R}^{r\imsize^3 \times r \imsize^3}\) & Matrix of weighted projection-backprojections\\
            \( \wtsrhsest \) & \(\Real^{r \imsize^3}\) & Concatenation of weighted back-projection images\\
            $\dvol^{(\ell)}$ & $\Real^{\imsize^3}$ & Spectral volumes
        \end{tabular}
    \end{center}
\end{table}

\section{Problem formulation} \label{sec:ProblemFormulation}

We begin by describing the forward model for cryo-EM  and then define the inverse problem that we wish to solve, first by considering the simpler case without heterogeneity and then by generalizing to the case of continuous heterogeneity.

\subsection{Forward model} \label{sec:ForwardModel}
A sample of many identical molecules is prepared in a solution and then rapidly frozen, forming a thin sheet of vitreous ice which is then imaged using a transmission electron microscope.
The resulting image is a measurement of the electrostatic potential of this thin sheet, integrated along the direction perpendicular to the imaging plane.
The individual molecules, known as ``particles'' in the cryo-EM literature, are all captured in different orientations.

For every molecule in a particular 3D conformation, there is a corresponding real-valued electrostatic density map which we simply refer to as the {\em volume} and discretize it on an $\imsize \times \imsize \times \imsize$ grid of voxels.
We now describe the data generation model. First, the volumes $\vol_1,\ldots,\vol_n$ are drawn i.i.d. from some distribution on $\Real^{\imsize^3}$ which describes the structural variability of the molecule.
Then, linear imaging operators $\proj_1, \ldots, \proj_n \in \Real^{\imsize^2 \times \imsize^3}$ are drawn i.i.d from some distribution.
These operators are the composition of a volume rotation operator $\rot_s,$ tomographic projection onto the imaging plane, and convolution with a point spread function. 
 The individual particle images $\im_1,\ldots,\im_n \in \Real^{\imsize^2}$ are formed by
\begin{eqnarray} \label{eq:ForwardModel}
    \im_s = \proj_s \vol_s + \noise_s \quad \forall s=1,2,\ldots,n,
\end{eqnarray}
where $\noise_s$ are noise terms.
For simplicity, we assume that $\noise_s \sim \mathcal{N}(0,\sigma^2\eye_{\imsize \times \imsize})$.
The cryo-EM forward operator also includes an in-plane shift after the projection and filtering. In our pipeline, this is estimated and corrected for during the classical reconstruction stage (Figure \ref{fig:pipeline}, step (i)).

We consider the volumes $\vol_s \in \Real^{\imsize^3}$ as functions $\vol_s: M_\imsize^3 \to \Real$, where $M_{\imsize} := [-1, -1+2/\imsize, \ldots, 1 - 2/ \imsize]$, is the grid for even values of $\imsize$ (a similar grid may be defined for odd $\imsize$).
Similarly, the images $\im_s \in \Real^{\imsize^2}$ are functions $\im_s: M_{\imsize}^2 \to \Real$.
To define the imaging operators $\proj_s$ we must define the tomographic projection operation.
One approach to this is in terms of line integrals perpendicular to the projection plane but since the volumes lie on a discrete
grid one must incorporate an interpolation scheme.
An alternative is to express tomographic projection in the Fourier domain.
Let $\bf s$ be a $d$-dimensional signal on $M_{\imsize}^d$, its discrete Fourier transform (DFT) is given by
\begin{eqnarray}
    (\mathcal{F}_d {\bf s})({\bf k}) := \sum_{{\bf u} \in M_{\imsize}^d} e^{-2 \pi i \langle {\bf k}, \bf u \rangle} {\bf s}[{\bf u}]
    \quad {\forall\bf k} \in \Real^d
\end{eqnarray}
where $\bf k$ is a wave vector that corresponds to a particular directional frequency.
By the Fourier slice theorem, a tomographic projection along the $z$ axis in the spatial domain is equivalent to a restriction  to the $x$-$y$ plane in the Fourier domain \cite{Natterer2001}.
We use this fact to express the projection image \( \proj_s \vol_s\) in the Fourier domain as follows:
\begin{eqnarray} \label{eq:ImagingOperatorFourier}
    (\mathcal{F}_2 \proj_s \vol_s)([k_1,k_2]^\transp) = (\mathcal{F}_3\vol_s)(\rot_s^{-1}[k_1,k_2,0]^\transp)\cdot(\mathcal{F}_2 \ctf_s)([k_1,k_2]^\transp).
\end{eqnarray}
where $[k_1, k_2]$ is a wave vector in the resulting 2D projection image, $\rot_s \in \Real^{3 \times 3}$ is the rotation of particle number $s$ and $\ctf_s$ is the point-spread function whose Fourier transform \( \mathcal{F}_2 \ctf_s \) is known as the \textit{contrast transfer function} (CTF).
See Section 2 of \cite{AndenSinger2018} for more details on the forward model.

\subsection{Inverse problem} \label{sec:InverseProblem}

\textbf{Homogeneous case}. \
The traditional inverse problem in single-particle cryo-EM assumes that all of the molecular volumes in the sample are identical.
Thus, the forward model \eref{eq:ForwardModel} simplifies to
\begin{eqnarray}
    \im_s = \proj_s \mean + \noise_s \quad \forall s=1,2,\ldots,n,
\end{eqnarray}
where $\mean$ is a mean volume.
%Existing single-particle cryo-EM reconstruction packages alternate between estimating the mean volume and the posterior distribution of the %orientations \cite{Frank2006,Scheres2012a}.
Suppose the orientations and CTFs are known so that we have the imaging operators $\proj_1, \ldots, \proj_n$.
Furthermore, suppose that the images are centered (i.e. in-plane shifts have been accounted for).
Then for a white Gaussian noise model, the maximum-likelihood estimate of $\mean$ is the solution to the following least-squares problem:
\begin{eqnarray}
    \label{eq:MeanLeastSquares}
    \meanest = \argmin{\mean \in \Real^{\imsize^3}} \sum_{s=1}^n \left\| \im_s - \proj_s \mean \right\|^2 \mbox{.}
\end{eqnarray}
This problem and regularized variants of it are not well-posed in general, with the condition number depending on the distribution of the viewing angles, the CTFs, and the desired resolution of the reconstruction.
Nevertheless, high accuracy solutions are routinely obtained using cryo-EM software packages. \cite{Scheres2012b,PunjaniEtal2017,GrantRohouGrigorieff2018,TangEtal2007}.\\[-0.5em]

\noindent \textbf{Continuous heterogeneity.}
Our main goal when analyzing a heterogeneous sample is to estimate the density of volumes $\vol \in \Real^{\imsize^3}$ associated with a given molecule.
We approach this problem by performing reconstructions of the individual volumes $\vol_1, \ldots, \vol_n$.
Clearly, estimating $n \imsize^3$ voxel values from merely $n \imsize^2$ noisy measurements is an ill-posed problem and much harder than the homogeneous problem, where only a single volume of $\imsize^3$ voxels needs to be estimated.
In this paper we make two main assumptions:
The first is that the molecular volumes in the sample lie near a low-dimensional manifold.  This model is natural since many heterogeneous macromolecules only have a few degrees of freedom that describe their range of motions \cite{DashtiEtAl2014,SchwanderFungOurmazd2014,FrankOurmazd2016,DashtiEtal2018}.
Varying these degrees of freedom traces out a smooth, low-dimensional manifold $\M \subset \Real^{\imsize^3}$.
The second assumption is that the imaging operators $\proj_s$ can be accurately estimated using standard cryo-EM reconstruction tools. This is the case when the molecule contains a large fixed component and a smaller heterogeneous part. A good indication that this is indeed the case for a particular dataset is when the reconstruction of the mean volume has a high resolution in some regions and lower resolution in others.

In the next section, we explain how we combine these assumptions with spectral techniques for function approximation on low-dimensional spaces to reconstruct all of the volumes in a heterogeneous molecular sample.

\section{Methods} \label{sec:Methods}

In this section, we describe our spectral approach to the reconstruction of molecular samples with continuous heterogeneity.
Our approach is based on the representation and approximation of molecular volumes using an orthogonal basis expansion of eigenfunctions. By expanding the molecular volumes in this basis and imposing the projection constraints we obtain a generalized spectral formulation of the cryo-EM reconstruction problem.

\subsection{Manifold spectral representation} \label{sec:SpectralGraphs}
Our method builds on the output of a low-resolution reconstruction method \cite{AndenSinger2018} that we describe in Section \ref{sec:LowResHeterogeneity}. In this method, each reconstructed volume is a linear combination of $q$ PCA eigenvolumes, hence it defines some mapping
\(
    (\im_s , \proj_s) \mapsto \coords_s
\)
where $\coords_s \in \Real^q$ is the vector of eigenvolume coefficients corresponding to a low-dimensional representation of $\vol_s$.
In what follows, we ignore potential ambiguities due to the projection and consider the low-resolution reconstruction as a linear dimensionality reduction of the underlying volume
\( \label{eq:MapToLowDim}
    \vol_s \mapsto \coords_s.
\)
Since we assumed the underlying manifold of volumes is $d$-dimensional, 
then if $d < q$ the image of this mapping is some compact domain $\coordsupp \subseteq \Real^q$ that is a $d$-dimensional immersed manifold.

In what follows we consider the approximation of smooth functions on general domains $\coordsupp$ via eigenfunctions of the Laplacian operator.
We briefly review some relevant facts \cite{GrebenkovNguyen2013}.
The Laplacian has a set of real eigenfunctions \( \meig^{(\ell)}: \coordsupp \to \Real \) that form a complete orthonormal basis of $L^2(\coordsupp)$ with corresponding non-negative eigenvalues $0 = \lambda_0 \le \lambda_1 \le \ldots \rightarrow \infty$.
The smoothness of $\meig^{(\ell)}$ is controlled by $\lambda_\ell$, which corresponds to the spatial frequency of $\meig^{(\ell)}$.
Consequently, the eigenfunctions with lowest eigenvalues form a natural basis for approximating smooth functions on $\coordsupp$.
In fact, this basis is optimal for the approximation of smooth functions with $L^2$ bounded gradient magnitudes \cite{AflaloBrezisKimmel2015}.
The idea of using Laplacian eigenfunctions for approximation and regression over arbitrary domains is a generalization of the classical approach for signal representation by Fourier series expansion \cite{GreblickiPawlak1985}.

Let us therefore consider the basis formed by the first $r$ eigenfunctions $\meig^{(0)}, \ldots, \meig^{(r-1)}$.
Fix a voxel $\vu \in \imsize^3$ and consider its associated restriction function $\vol[\vu]$.
We may approximate this function  using low-frequency eigenfunctions
\begin{eqnarray} \label{eq:LaplaceDecompVoxel}
    \vol[{\bf u}] \approx \sum_{\ell=0}^{r-1} \dvolscalar_{\bf u}^{(\ell)} \meig^{(\ell)}(\coords(\vol)),
\end{eqnarray}
where $\coords(\vol) \in \coordsupp$ is the image of $\vol$ in PCA coordinates.
This can be written more succinctly by aggregating  the coefficients for all voxels into a single volume, yielding
\begin{eqnarray} \label{eq:LaplaceDecomp}
    \vol \approx \sum_{\ell=0}^{r-1} \dvol^{(\ell)} \meig^{(\ell)}(\coords(\vol)), \quad \forall \coords \in \coordsupp.
\end{eqnarray}
We call the coefficient vectors $\dvol^{(0)}, \ldots, \dvol^{(r-1)} \in \Real^{\imsize^3}$ \emph{spectral volumes}.
Note that the above construction does not rely on a voxel-wise representation of the volumes as the same type of expansion can be done for volumes represented in any spatial basis.

The eigenfunctions are unknown, so we employ a widely used technique from the field of manifold learning, replacing them with estimates given by eigenvectors of a data-driven graph Laplacian.
More specifically, we build a weighted undirected graph, where the vertices correspond to the projection images $\im_1, \ldots, \im_n$ and the edge weights are estimates of the affinity between the underlying molecular conformations.
In our case, the affinities are computed from the low-resolution reconstruction coordinate $\estcoords_s$ described in Section \ref{sec:LowResHeterogeneity}. We then form the symmetric normalized graph Laplacian  and compute its $r$ eigenvectors with the lowest eigenvalues,
\begin{equation} \label{eq:defhatPhi}
    \geig^{(0)}, \ldots, \geig^{(r-1)} \in \Real^n.
\end{equation}
See Section \ref{sec:GraphComputations} for the specific algorithms used for forming the graph and computing these eigenvectors.
As we explain in Section \ref{sec:convergence}, we may assume that these estimates converge to the eigenfunctions in the sense that
\begin{eqnarray}
    \geigs^{(\ell)} \approx \frac{1}{\sqrt{n}} \meig^{(\ell)}(\coords_s) \quad \forall s = 1, 2, \ldots, n,
\end{eqnarray}
where the $\sqrt{n}$ factor is needed for proper normalization, so that
\begin{eqnarray}
    \sum_{s=1}^n \left(\geigs^{(\ell)} \right)^2 =1.
\end{eqnarray}
We can now write a data-driven variant of the spectral expansion in \eref{eq:LaplaceDecomp},
\begin{eqnarray} \label{eq:LaplaceDecompS}
\vol_s \approx \sqrt{n} \sum_{\ell=0}^{r-1} \dvol^{(\ell)} \geigs^{(\ell)} \quad \forall s = 1, 2, \ldots, n.
\end{eqnarray}
In the next section we explain how we estimate the coefficients of this expansion.

\subsection{Generalized tomographic reconstruction} \label{sec:GenTomographicReconstruction}

We assume that the molecular orientations can be accurately estimated using standard methods for homogeneous cryo-EM reconstruction \cite{PenczekKimmelSpahn2011,LiaoHashemFrank2015}, so that the projection operators $\proj_s$ are estimated to high accuracy.
By applying the imaging matrix $\proj_s$ to both sides of \eref{eq:LaplaceDecompS} and plugging in the forward model \eref{eq:ForwardModel}, we obtain
\begin{eqnarray}
    \im_s \approx \sqrt{n} \sum_{\ell=0}^{r-1} \left( \proj_s^{} \dvol^{(\ell)} \right) \geigs^{(\ell)} \quad \forall s = 1, 2, \ldots, n.
\end{eqnarray}
We seek spectral volumes that minimize the squared error
\begin{eqnarray} \label{eq:DiffVolDef}
    \left( \dvolest^{(0)}, \ldots, \dvolest^{(r-1)} \right)
    :=
    \argmin{\left\{\dvol^{(0)}, \ldots, \dvol^{(r-1)}\right\}}
    \sum_{s=1}^n
    \left\| \im_s - \sqrt{n}\sum_{\ell=0}^{r-1} \left( \proj_s^{} \dvol^{(\ell)} \right) \geigs^{(\ell)} \right\|^2.
\end{eqnarray}
The minimizer can be calculated efficiently by forming the normal equations and solving them using the conjugate gradient method.
See Section \ref{sec:SpecVolEst} for more details on the numerical solution of this minimization problem.
Note that in contrast to the low-resolution PCA eigenvolumes, the spectral volumes are at the full resolution $\imsize$.
Our high-resolution reconstructions of the molecular volumes are now given by
\begin{eqnarray} \label{eq:highresvolest}
    \volest_s = \sqrt{n}\sum_{\ell=0}^{r-1} \geigs^{(\ell)} \dvolest^{(\ell)} \quad \forall s = 1, 2, \ldots, n \mbox{.}
\end{eqnarray}
This estimator generalizes the least-squares estimator \eqref{eq:MeanLeastSquares} for a single mean volume to multiple volumes $\dvolest^{(0)}, \ldots, \dvolest^{(r-1)}$ whose contribution to the reconstructed volumes is given by the Laplacian eigenvectors $\geig^{(0)}, \ldots, \geig^{(r-1)}$ defined in Eq. \eqref{eq:defhatPhi}.

\subsection{Low-resolution reconstruction} \label{sec:LowResHeterogeneity}

While the approach outlined above provides a recipe for computing the eigenvectors $\geig^{(0)}, \ldots, \geig^{(r-1)}$ and using them to obtain high-resolution volume estimates, a crucial ingredient is missing still: the graph weights $W_{ij}$.
We would like them to approximate an affinity of the underlying molecular volumes.

Several approaches have been proposed for computing affinities between projection images of heterogeneous ensembles.
One of the earliest was to compute affinities using a common-line distance \cite{HermanKalinowski2008},
without estimating the relative orientations. This procedure finds the best common-line correspondence out of all candidate common lines, resulting in very noisy affinity estimates.
To reduce the noise one can first estimate the orientations of the projection images and then compute the common line distance based on the relative orientation.
This was proposed in \cite{ShatskyEtal2010}, however, the resulting affinity measure is still very noisy, so the authors first performed 2D class averaging within each set of projection images from the same orientation. However, this may average different conformations together.

We define the affinity $W_{ij}$ to be the Euclidean distance between the low-resolution reconstructions, obtained using the covariance estimation method
\cite{AndenSinger2018}.
This approach achieves robustness to noise without averaging different conformations together.
We now briefly describe their method.
The first step is to estimate the mean $\mean = \Expect[\vol]$ of the distribution of molecular volumes.  This is done by taking the derivative of \Eref{eq:MeanLeastSquares} with respect to $\mean$ and setting it equal to zero. This yields the normal equations
\begin{eqnarray}
    \label{eq:MeanNormal}
    \frac{1}{n} \left( \sum_{s=1}^{n} \proj_s^\transp \proj_s^{} \right) \meanest = \frac{1}{n} \sum_{s=1}^{n} \proj_s^\transp \im_s^{} \mbox{.}
\end{eqnarray}
This formulation  corresponds to the maximum-likelihood estimator of $\Expect[\vol]$ in the setting of Gaussian white noise.
As a consequence, $\meanest$ is a consistent estimator \cite{KatsevichKatsevichSinger2015}.
A similar estimator for the covariance matrix
\(
    \Cov[\vol] := \Expect[(\vol-\Expect[\vol])(\vol-\Expect[\vol])^\transp]
\)
is given by
\begin{eqnarray}
    \covest
    =
    \argmin{\cov \in \Real^{\imsize^3 \times \imsize^3}}
    \sum_{s=1}^n
    \left \|
        (\proj_s^{} \cov \proj_s^\transp + \noisestd^2 \eye_{\imdim})
        -
        (\im_s - \proj_s \meanest)(\im_s - \proj_s \meanest)^\transp
    \right \|^2_\frob \mbox{.}
\end{eqnarray}
While not a maximum-likelihood estimator, it is consistent under mild conditions  \cite{KatsevichKatsevichSinger2015}.
Computing its normal equations yields a linear system in $\bigO(\imsize^6)$ variables.
Fortunately, this linear system can be reformulated as a deconvolution problem in six dimensions.
Precalculating the convolution kernel requires $\bigO(\imsize^6 \log \imsize + n \imsize^4)$ operations, but it can then be applied with complexity $\bigO(\imsize^6 \log \imsize)$.
The equations can now be solved using the preconditioned conjugate gradient method.
Empirically, it takes around $50$ iterations to converge \cite{AndenSinger2018}.

While more efficient than a naive approach, the algorithm outlined above still scales poorly in image size $\imsize$.
As a result, this covariance estimation method is not currently practical for $\imsize > 25$.
Furthermore, from a simple dimensionality argument, to estimate the $\bigO(\imsize^6)$ elements of $\Cov[\vol]$ from $n$ images of size $\imsize \times \imsize$, we need at least $n = \bigO(\imsize^6 / \imsize ^2) = O(\imsize^4)$ images.
So to apply the algorithm to experimental data, we must first downsample the images from $\imsize \times \imsize$ to $\imsizesub \times \imsizesub$.
It is possible to gain insight on the structural variability using this approach, but the resulting reconstructions are of low-resolution.

After obtaining the mean and covariance estimates $\meanest$ and $\covest$, the volumes $\vol_1, \ldots, \vol_n$ can be reconstructed by the PCA method introduced in \cite{PenczekKimmelSpahn2011}.
First, the $\neig$ eigenvectors, or eigenvolumes, of $\covest$ are extracted and arranged as columns in a $\imsizesub^3 \times \neig$ matrix $\eigvest$.
They represent the principal directions of molecular volume variability in $\Real^{\imsizesub^3}$.
Together with the estimated mean, they define an affine $\neig$-dimensional subspace of $\Real^{\imsizesub^3}$ of the form $\meanest + \eigvest \coords$, where $\coords \in \coordsupp \subseteq\Real^\neig$ is a coordinates vector.
Each image $\im_s$ may then be associated with a volume in the affine subspace through \cite{AndenSinger2018}
\begin{eqnarray} \label{eq:CovCoords}
    \estcoords_s := \argmin{\coords \in \Real^q} \frac{1}{\noisestd^2} \left\| \im_s - \proj_s \left( \meanest+\eigvest \coords \right) \right\|^2 + \left\|\Lambda_q^{-1/2} \coords \right\|^2 \mbox{,}
\end{eqnarray}
where $\Lambda_q^{} = \eigvest^\transp \covest \eigvest^{}$ is the diagonal matrix of the leading $q$ eigenvalues of $\covest$.
The above estimator is the maximum a posteriori (MAP) estimator of the coordinates of $\vol_s$ for Gaussian distributions of $\vol_s$ and $\noise_s$.
It is also equal to the Wiener filter estimator and the linear minimum mean squared error estimator of the coordinates \cite{Mallat2009,Kay1993}.

Given the solutions to \eref{eq:CovCoords}, we have a low-resolution estimate of each volume $\vol_s$ given by $\meanest + \eigvest \estcoords_s$.
We assume that the manifold structure of $\M$  is not destroyed by the mapping of projection images to coordinate vectors in $\Real^\neig$, hence that it is possible  to invert this process and associate a unique molecular conformation with every low-dimensional reconstruction.
If the intrinsic dimensionality of the conformation space is low and the volumes vary smoothly along this space then the inverse map $\coordsupp \to \Real^{\imsize^3}$ can be approximated by a small number of spectral volumes.

\section{Algorithms and computational complexity} \label{sec:Algorithms}

In this section, we provide the technical details of our reconstruction method.
In Section \ref{sec:GraphComputations} we describe the precise methods used to form the graph Laplacian and compute its eigenvectors, and in Section \ref{sec:SpecVolEst} we describe the deconvolution-based solution of the generalized tomographic reconstruction problem \eref{eq:DiffVolDef}.

\subsection{Graph computations} \label{sec:GraphComputations}

To compute the PCA eigenvolumes, we begin by downsampling the input images to size $\imsizesub \times \imsizesub$, where $\imsizesub$ is typically about 16.
These images are then fed into the mean and covariance estimation pipeline described in \cite{AndenSinger2018}.
It has computational complexity $\bigO(n \imsizesub^4 + \sqrt{\kappa^\prime} \imsizesub^6 \log \imsizesub)$. The condition number $\kappa^\prime$ is of the order of $100$.
The top $q$ eigenvectors of the estimated covariance $\covest$ are computed and the $q$-dimensional coordinates $\estcoords_s$ of each image are obtained via \eref{eq:CovCoords}.
This step has computational complexity $\bigO(q \imsizesub^3 \log \imsizesub + n q^2 \imsizesub^2)$, following the algorithm described in \cite{AndenSinger2018}.
A weighted undirected graph is then constructed with vertices that correspond to the images \( \im_1, \ldots, \im_n \) and edge weights calculated from the PCA coordinates $\estcoords_1, \ldots, \estcoords_n$.
We tested two kinds of weight matrices:
\begin{enumerate}
    \item Gaussian kernel weights \(W_{ij} = \euler^{-\| \estcoords_i - \estcoords_j \|^2/2\sigma^2} \).
    \item Binary symmetric KNN matrices, whereby \( W_{ij} = 1 \) if and only if $\estcoords_i$ is one of the $k$ nearest neighbors of $\estcoords_j$ or vice versa, and \( W_{ij} = 0 \) otherwise.
\end{enumerate}
In our preliminary experiments we obtained similar results with both choices.
For our final results, we chose to use the symmetric KNN graph since it is sparse, which reduces the memory and computational costs.
For the Laplacian matrix, we use the symmetric normalized graph Laplacian
\begin{eqnarray}
    \L := \D^{-1/2} (\D - \W) \D^{-1/2} = \I - \D^{-1/2} \W \D^{-1/2},
\end{eqnarray}
where $\D$ is a diagonal matrix that satisfies \( D_{ii} = \sum_j W_{ij} \).
The symmetry of $\L$ permits the use of specialized algorithms for eigenvector calculation and guarantees that the resulting eigenvectors are orthogonal.
See the tutorial by \cite{Vonluxburg2007} for other common choices of weight and Laplacian matrices.

We build the KNN weights matrix $\W$ using MATLAB's \textsf{knnsearch} function which for low dimensions is based on a KDTree \cite{FriedmanBentleyFinkel1977}.
The running time of this part is \( \bigO(q n \log n)\) where $q$ is the dimension of the PCA coordinates $\estcoords_s$ used in the low-resolution reconstruction.
We then form the Laplacian matrix $\L$ and compute its $r$ eigenvectors $\geig^{(0)}, \ldots, \geig^{(r-1)}$ with lowest eigenvalues using MATLAB's \textsf{eigs} function.
This function implements the Krylov--Schur algorithm \cite{Stewart2001}.
The matrices $\W$ and $\L$ are stored as sparse matrices of average degree $\bigO(k)$, hence their memory usage is $\bigO(nk)$.
There exist newer methods of computing eigenvectors, such as the algebraic multigrid preconditioner used by the \textsf{megaman} manifold learning package \cite{McQueenEtal2016,OlsonSchroder2018}.
We did not incorporate such methods in the current work, as the eigenvector calculation step was not a bottleneck in our implementation.

\subsection{Spectral volume estimation} \label{sec:SpecVolEst}

Recall that the spectral volumes are defined in \eref{eq:DiffVolDef} as minimizers of the generalized tomographic reconstruction equation, \eref{eq:DiffVolDef}.
To find this minimum, we compute the gradient with respect to $\{\dvol^{(\ell)}\}_{\ell=0}^{r-1}$ and set it to zero, obtaining the normal equations
\begin{eqnarray} \label{eq:DiffvolEquality}
    \hspace{-10mm}
    \frac{1}{\sqrt{n}}
    \sum_{s=1}^n \geigs^{(\ell)} \proj_s^\transp\im_s^{}
    =
    \sum_{m=0}^{r-1}
    \sum_{s=1}^n
    \geigs^{(\ell)}
    \geigs^{(m)}
    \proj_s^\transp \proj_s^{}
    \dvol^{(m)}
    \quad
    \forall \ell = 0, 1, \ldots, r-1.
\end{eqnarray}
We can rewrite the equation  in vector notation by defining the vectors $\wtsrhsest^{(0)}, \ldots, \wtsrhsest^{(r-1)} \in \mathbb{R}^{\imsize^3}$ to be weighted backprojected images
\begin{eqnarray}
    \wtsrhsest^{(\ell)} = \frac{1}{\sqrt{n}}\sum_{s=1}^n \geigs^{(\ell)} \proj_s^\transp \im_s^{} \ {},
\end{eqnarray}
and $\wtskerest \in \mathbb{R}^{r\imsize^3 \times r \imsize^3}$ to be an $r \times r$ block matrix, with blocks of size $\imsize^3 \times \imsize^3$.  Each block is a weighted sum of projection-backprojection matrices, with its $(\ell, m)$ block given by
\begin{eqnarray} \label{eq:kermat}
    \wtskerest^{(\ell, m)}
    =
    \sum_{s=1}^n
    \geigs^{(\ell)}
    \geigs^{(m)}
    \proj_s^\transp
    \proj_s^{} .
\end{eqnarray}
By defining the vector $\wtsrhsest \in \mathbb{R}^{r\imsize^3}$ to be the concatenation of  $\wtsrhsest^{(0)}, \ldots, \wtsrhsest^{(r-1)}$ and $\dvol \in \mathbb{R}^{ r \imsize^3}$ to be the concatenation of  \(\dvol^{(0)}, \ldots, \dvol^{(r-1)}\) we can rewrite \eref{eq:DiffvolEquality} as
\begin{eqnarray} \label{eq:DiffvolEquation}
    \wtsrhsest = \wtskerest \dvol .
\end{eqnarray}
Since $\wtskerest$ is of size $\imsize^3 r \times \imsize^3 r$, it would be very expensive to directly solve this equation using standard direct inversion algorithms such as those based on LU or Cholesky decomposition, since this would require $\bigO(\imsize^9 r^3)$ operations.
Even merely storing the matrix $\wtskerest$ in RAM may be prohibitive.
However, if we use an iterative solver such as the conjugate gradient method, we do not need to explicitly store the matrix $\wtskerest$ so long as we have an efficient method to apply it.
To this end, we draw on the work of \cite{WangShkolniskySinger2013} and note that applying $\proj_s^\transp \proj_s^{}$ to a volume is equivalent to convolving that volume with a kernel calculated from $\rot_s$ and $\ctf_s$.
A complication arises from the fact that the points $\rot_s^{-1}[k_1,k_2,0]^\transp$ in \eqref{eq:ImagingOperatorFourier} do not lie on a regular grid, hence
to evaluate the expression \( (\mathcal{F}_3\vol_s)(\rot_s^{-1}[k_1,k_2,0]^\transp) \) we need to compute Fourier amplitudes on a non-regular
grid which cannot be achieved through  the standard FFT. Instead, we use the \textsf{FINUFFT} non-uniform fast Fourier transform software package \cite{BarnettEtal2019}.
It has computational complexity $\bigO(\imsize^3 \log \imsize + S)$ where $S$ is the number of points at which the transform is computed.
Here, $S = \imsize^2 n$, as both $\ctf_s$ and $\im_s$ are of size $\imsize \times \imsize$, and we consider $n$ instances of projection images.
We must compute the convolution kernel that corresponds to  $\wtskerest^{(\ell, m)}$ for each of the $r^2$ $(\ell,m)$-pairs, and $\wtsrhsest^{(\ell)}$ for each $\ell$.
Thus, the total time to calculate the convolution kernels of all the blocks of $\wtskerest$ is $\bigO(r^2\imsize^3 \log \imsize + r^2n\imsize^2)$.
The backprojected images  vector $\wtsrhsest$ is also calculated from $\rot_s$, $\ctf_s$, and $\im_s$ using a non-uniform FFT at a total computational cost of $\bigO(r\imsize^3 \log \imsize + rn\imsize^2)$.

Each step of the conjugate gradient method involves applying the forward operator $\wtskerest$ as well as performing several vector dot products and additions.
Applying the forward operator is done using $r^2$ FFT operations of size $\imsize \times \imsize \times \imsize$, which has a total complexity of $\bigO(r^2 \imsize^3 \log \imsize)$.
The complexity of the conjugate gradient method is thus $\bigO(\sqrt{\kappa} r^2 \imsize^3 \log \imsize)$, where $\kappa$ is the condition number of $\wtskerest$, since the conjugate gradient method converges in $\bigO(\sqrt{\kappa})$ steps \cite{GolubVanLoan2013,TrefethenBau1997}.
In conclusion, the total runtime for solving the normal equations \eqref{eq:DiffvolEquality} is $\bigO(r^2 n \imsize^2 + \sqrt{\kappa} r^2 \imsize^3 \log \imsize)$.
For our synthetic data sets \textsf{ChannelSpin} and \textsf{ChannelStretch}, using $r=15$ spectral volumes we found that $\kappa$ is of the order of $10$--$30$.
See Section \ref{sec:Runtime} for empirical runtimes on these data sets.
\begin{remark}
    The running time may be reduced by computing an approximation to $\wtskerest$.
    In the proof of Theorem \ref{thm:specvolconvergence} we show that \( \wtskerest^{(\ell, m)} \to \delta_{\ell,m}\Expect[\proj^\transp \proj^{}] \) in probability. We can thus approximate $\wtskerest$ by setting the off-diagonal blocks to zero and setting the diagonal blocks to the  empirical estimate of \(\Expect[\proj^\transp \proj^{}]\)
    \begin{eqnarray}
        \wtskerest^{(\ell, \ell)}
        =
        \frac{1}{n}
        \sum_{s=1}^n
        \proj_s^\transp
        \proj_s^{}.
    \end{eqnarray}
    With this approximation, the time to approximate $\wtskerest$ reduces to $\bigO(\imsize^3 \log \imsize + n\imsize^2)$ and is now dominated by the computation of $\wtsrhsest$. The time to multiply vectors by $\wtskerest$ is \( \bigO(r\imsize^3 \log \imsize) \),
    so the total runtime drops by a factor of $r$ to
    \(
        \bigO(rn\imsize^2 + \sqrt{\kappa} r \imsize^3 \log \imsize)
    \).
\end{remark}
\section{Theory} \label{sec:Theory}
In this section, we analyze the solution to the generalized tomographic reconstruction as defined in \eref{eq:DiffVolDef}, starting with a simplified special case.

\subsection{Warmup: Spectral volumes without projections} \label{sec:TheoryNoProj}
We first analyze the solution in an easy setting where the imaging operators  $\proj_1, \ldots, \proj_n$ are all equal to the identity matrix.
That is, we have direct, albeit noisy, measurements ${\bf z}_s = \vol_s + \noise_s$ without projections and point spread function.
This case is directly applicable for reconstructing a manifold of 2D images, as we later demonstrate in Section \ref{sec:2DClock}.
In this setting, the spectral volumes $\dvolest^{(0)}, \ldots, \dvolest^{(r-1)}$ minimize
\begin{eqnarray} \label{eq:DiffvolNoProj}
    \sum_{s=1}^n \left\| {\bf z}_s - \sqrt{n}\sum_{\ell=0}^{r-1} \geigs^{(\ell)} \dvol^{(\ell)}  \right\|^2 \mbox{.}
\end{eqnarray}
In this sum, each voxel $\bf u$ can be considered separately, giving
\begin{eqnarray}
    \dvolscalarest^{(\ell)}[{\bf u}] = \argmin{\dvolscalar^{(\ell)}[\bf u]} \sum_{s=1}^n \left| z_s[{\bf u}] - \sqrt{n}\sum_{\ell=0}^{r-1} \geigs^{(\ell)} \dvolscalar^{(\ell)}[{\bf
u}] \right| ^2 \mbox{.}
\end{eqnarray}
For a symmetric graph Laplacian $\L$, the eigenvectors $\geig_0, \ldots, \geig_{r-1}$ form an orthonormal set.
Hence the coefficient $\dvolscalarest^{(\ell)}[{\bf u}]$ is given by an orthonormal projection of $z[{\bf u}]$ onto $\geig^{(\ell)}$
\begin{eqnarray}
    \dvolscalarest^{(\ell)}[{\bf u}] = \frac{1}{\sqrt{n}}\sum_{s=1}^{n} \geigs^{(\ell)} z_s^{}[{\bf u}] = \frac{1}{\sqrt{n}}\sum_{s=1}^{n} \geigs^{(\ell)} (\volscalar_s^{}[{\bf u}] + \noisescalar_s^{}[{\bf u}]),
\end{eqnarray}
or, in vector form,
\begin{eqnarray} \label{eq:SpecVolsSolWithoutProj}
    \hspace{-10mm}
    \dvolest^{(\ell)}
    =
    \frac{1}{\sqrt{n}}\sum_{s=1}^{n} \geigs^{(\ell)} \left( \vol_s^{} + \noise_s^{}\right)
    =
    \frac{1}{\sqrt{n}}\sum_{s=1}^{n} \geigs^{(\ell)} \vol_s^{} + \mathcal{N}\left(0,  \frac{\sigma^2}{n} \eye_{\imsize^2}\right).
\end{eqnarray}
The last equality stems from the fact that the noise terms satisfy \( \noise_s \sim \mathcal{N}(0, \sigma^2 \eye_{\imsize^2}) \).
Consequently, the spectral volumes in this simplified model are, up to a noise term, orthogonal projections of the true volumes $\vol_1, \ldots, \vol_n$ onto the basis of Laplacian eigenvectors.
In the next subsection, we show that this is also the case when tomographic projections are incorporated into the model.

\subsection{Spectral volumes with projections}

We now consider the full forward model with non-trivial imaging operators $\proj_1, \ldots, \proj_n$.
First note that in our model, the images \( \im_1, \ldots, \im_n \) and the imaging operators are random vectors, therefore the Laplacian eigenvectors \(\geig^{(0)} \ldots \geig^{(r-1)} \in \Real^n\) are also random vectors.
For our analysis, we make the following two assumptions:
\begin{assumption} \label{assump:IndependenceDiffcoordsProjections}
    Let $\im_s$ be an image, drawn according to the forward model \eref{eq:ForwardModel}.
    Then its  Laplacian eigenvector coordinates $\geigs^{(0)}, \ldots, \geigs^{(r-1)}$ are independent of $\proj_s$.
\end{assumption}
In other words, the Laplacian eigenmap (or diffusion map) coordinates are independent of the viewing direction and CTF of the particle.
We can justify this assumption by assuming that the covariance-based method of \cite{AndenSinger2018} performs accurate low-resolution reconstruction, regardless of the viewing angle.
\begin{assumption} \label{assump:EigenvectorBound}
    For any $r > 0$, the following sum converges in probability:
    \begin{eqnarray} \label{eq:FourthMomentConvergence}
        \max_{\ell \in \{ 0, \ldots, r-1 \}}
        \sum_{s=1}^n
        \left( \geigs^{(\ell)} \right)^4
        \to
        0.
    \end{eqnarray}
    That is, for any $\epsilon > 0$, the probability that \(  \sum_{s=1}^n \left(\geigs^{(\ell)} \right)^4 > \epsilon \) tends to zero as \( n \to \infty \).
\end{assumption}
Note that from the normalization constraint $\sum_{s=1}^n ( \geigs^{(\ell)}
)^2 = 1$, unless the energy of the eigenvectors is highly concentrated, we expect to have $\geigs^{(\ell)} \sim 1/\sqrt{n}$ and thus \( \sum_{s=1}^n ( \geigs^{(\ell)}
)^4 \sim 1/n \), in which case Assumption \ref{assump:EigenvectorBound}  holds. 
Under standard assumptions the Laplacian eigenvectors converge to limiting eigenfunctions of some differential operator.
As we show in Section \ref{sec:convergence}, if the eigenfunctions are bounded and this spectral convergence holds then Assumption \ref{assump:EigenvectorBound} follows.

Before stating our main result, we recall big-O in probability notation for stochastic boundedness: a sequence of random variables $\{X_n\}_{n=1}^\infty$ satisfies $X_n = \bigOprob(f(n))$ if for every $\epsilon > 0$, there is some bound $M_\epsilon$ such that $\Pr[|X_n|/f(n) > M_\epsilon] < \epsilon$.
We now state our main result which characterizes the estimated spectral volumes $\dvolest^{(0)}, \ldots, \dvolest^{(r-1)}$ up to a stochastically bounded error.
\begin{theorem}(Spectral volume convergence) \label{thm:specvolconvergence}
    Let $\geig^{(\ell)}$ be an eigenvector of the symmetric graph Laplacian  described in Section \ref{sec:GraphComputations}.
    Under Assumptions \ref{assump:IndependenceDiffcoordsProjections} and \ref{assump:EigenvectorBound}, the spectral volumes as defined in \eref{eq:DiffVolDef} satisfy
    \begin{eqnarray} \label{eq:SpecvolConvergence}
        \dvolest^{(\ell)}
        =
        \Expect
        \left[
            \frac{1}{\sqrt{n}}
            \sum_{s=1}^n
            \geigs^{(\ell)} \vol_s^{}
        \right]
        +
        \bigOprob\left(\frac{1}{\sqrt{n}}\right),
    \end{eqnarray}
    where the expectation is taken with respect to the random draw of projection images as described in Section \ref{sec:ForwardModel}.
\end{theorem}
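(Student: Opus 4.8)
The plan is to work directly with the normal equations \eref{eq:DiffvolEquation}, $\wtskerest\dvol = \wtsrhsest$, and to show that the block kernel $\wtskerest$ and the right-hand side $\wtsrhsest$ each concentrate, so that the solution $\dvolest = \wtskerest^{-1}\wtsrhsest$ concentrates as well. Write $G := \Expect[\proj^{\transp}\proj]$. As in the homogeneous problem \eref{eq:MeanLeastSquares}, I assume the viewing directions cover the sphere well enough that $G$ is invertible; this is exactly the regime in which the reconstruction is well-posed. The target vector $\Expect[\frac{1}{\sqrt n}\sum_s \geigs^{(\ell)}\vol_s]$ is deterministic and bounded, which will matter when propagating errors through $\wtskerest^{-1}$.

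The first step is kernel convergence. From \eref{eq:kermat} each block is $\wtskerest^{(\ell,m)} = \sum_s \geigs^{(\ell)}\geigs^{(m)}\proj_s^{\transp}\proj_s$. Writing $\proj_s^{\transp}\proj_s = G + (\proj_s^{\transp}\proj_s - G)$ and using the exact orthonormality $\sum_s \geigs^{(\ell)}\geigs^{(m)} = \delta_{\ell m}$ of the symmetric-Laplacian eigenvectors, the leading contribution is precisely $\delta_{\ell m}G$. For the remainder $\sum_s \geigs^{(\ell)}\geigs^{(m)}(\proj_s^{\transp}\proj_s - G)$ I condition on the eigenvectors; by Assumption \ref{assump:IndependenceDiffcoordsProjections} the centered matrices $\proj_s^{\transp}\proj_s - G$ stay i.i.d. and mean-zero, so the conditional second moment of each entry is controlled by $\sum_s (\geigs^{(\ell)})^2(\geigs^{(m)})^2 \le \sqrt{\sum_s (\geigs^{(\ell)})^4 \sum_s (\geigs^{(m)})^4}$. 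By Assumption \ref{assump:EigenvectorBound} at its natural $1/n$ scaling this is $\bigOprob(1/n)$, so $\wtskerest^{(\ell,m)} = \delta_{\ell m}G + \bigOprob(1/\sqrt n)$ and $\wtskerest$ converges to the block-diagonal matrix $\mathrm{blockdiag}(G,\ldots,G)$, which is invertible with bounded inverse with high probability.

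The second step is the right-hand side. Plugging the forward model \eref{eq:ForwardModel} into $\wtsrhsest^{(\ell)} = \frac{1}{\sqrt n}\sum_s \geigs^{(\ell)}\proj_s^{\transp}\im_s$ splits it into a signal part $\frac{1}{\sqrt n}\sum_s \geigs^{(\ell)}\proj_s^{\transp}\proj_s\vol_s$ and a noise part $\frac{1}{\sqrt n}\sum_s \geigs^{(\ell)}\proj_s^{\transp}\noise_s$. Conditioned on the eigenvectors and operators the noise part is mean-zero Gaussian with covariance $\frac{\sigma^2}{n}\sum_s (\geigs^{(\ell)})^2\proj_s^{\transp}\proj_s$, whose norm is at most $\frac{\sigma^2}{n}\max_s\|\proj_s\|^2$ by the normalization $\sum_s (\geigs^{(\ell)})^2 = 1$; hence it is $\bigOprob(1/\sqrt n)$, mirroring the noise term of the projection-free case \eref{eq:SpecVolsSolWithoutProj}. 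For the signal part I invoke the eigenfunction approximation $\geigs^{(\ell)} \approx \frac{1}{\sqrt n}\meig^{(\ell)}(\coords_s)$, which turns it into the i.i.d. average $\frac{1}{n}\sum_s \meig^{(\ell)}(\coords_s)\proj_s^{\transp}\proj_s\vol_s$ over the i.i.d. triples $(\coords_s,\vol_s,\proj_s)$. The central limit theorem then gives its mean up to $\bigOprob(1/\sqrt n)$, and Assumption \ref{assump:IndependenceDiffcoordsProjections} factors that mean as $\Expect[\proj^{\transp}\proj]\,\Expect[\meig^{(\ell)}(\coords)\vol] = G\,\Expect[\frac{1}{\sqrt n}\sum_s \geigs^{(\ell)}\vol_s]$, so $\wtsrhsest^{(\ell)} = G\,\Expect[\frac{1}{\sqrt n}\sum_s \geigs^{(\ell)}\vol_s] + \bigOprob(1/\sqrt n)$.

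Finally, I combine the two estimates through $\dvolest = \wtskerest^{-1}\wtsrhsest$: boundedness of $\wtskerest^{-1}$ lets the $\bigOprob(1/\sqrt n)$ fluctuations pass through, the $G$ and $G^{-1}$ cancel, and what remains is \eref{eq:SpecvolConvergence} with an $\bigOprob(1/\sqrt n)$ remainder. The main obstacle is the signal term: unlike the projection-free case, the global, data-dependent eigenvector $\geigs^{(\ell)}$ is coupled to $\vol_s$, so a naive per-term bound fails; one must use Assumption \ref{assump:IndependenceDiffcoordsProjections} to decouple the random projection from $(\coords_s,\vol_s)$ and the i.i.d. structure of the conformation coordinates (through the eigenfunction approximation) to reach CLT-rate concentration. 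A secondary subtlety is that obtaining the advertised $1/\sqrt n$ rate, rather than mere $\littleOprob(1)$, requires the fourth-moment sums in Assumption \ref{assump:EigenvectorBound} to decay at their natural $1/n$ rate, as holds for bounded eigenfunctions.
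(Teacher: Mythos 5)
Your proposal follows the same overall architecture as the paper's proof: show that the block kernel $\wtskerest$ concentrates around $\mathrm{blockdiag}(G,\ldots,G)$ with $G=\Expect[\proj^\transp\proj]$, show that $\wtsrhsest^{(\ell)}$ concentrates around $G\,\Expect[\frac{1}{\sqrt n}\sum_s\geigs^{(\ell)}\vol_s]$, and then push the fluctuations through a first-order perturbation of $\wtskerest^{-1}$. Your treatment of the kernel is essentially identical to the paper's (the paper phrases it as an expectation--variance computation via a small lemma on weighted sums with weights independent of the summands, with the same Cauchy--Schwarz reduction of $\sum_s(\geigs^{(\ell)})^2(\geigs^{(m)})^2$ to the fourth-moment sums of Assumption \ref{assump:EigenvectorBound}), and your closing remarks about invertibility of $G$ and about needing the $1/n$ scaling of the fourth-moment sums to get a genuine $1/\sqrt n$ rate rather than $\littleOprob(1)$ are both sharper than what the paper writes down.

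The one substantive divergence is the signal part of $\wtsrhsest^{(\ell)}$. You correctly identify that $\geigs^{(\ell)}$ is coupled to $\vol_s$, so a per-term independence argument fails there, and you resolve this by substituting $\geigs^{(\ell)}\approx\frac{1}{\sqrt n}\meig^{(\ell)}(\coords_s)$ and applying the CLT to the resulting i.i.d.\ average. But that substitution is exactly Assumption \ref{assump:eigenconvergence}, which is \emph{not} among the hypotheses of the theorem (only Assumptions \ref{assump:IndependenceDiffcoordsProjections} and \ref{assump:EigenvectorBound} are); in the paper it is reserved for Corollary \ref{cor:dvolestconvergence} and Theorem \ref{thm:HighResConvergence}. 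The paper instead applies its variance lemma directly with weights $W_s=\geigs^{(\ell)}$ and $Z_s=\mathbf{e}_i^\transp\proj_s^\transp\im_s$, using only $\sum_s(\geigs^{(\ell)})^2=1$ to get $\Var(\mathbf{e}_i^\transp\wtsrhsest^{(\ell)})\le\frac{1}{n}\Expect[Z_1^2]$ --- which keeps the proof within the stated assumptions, at the price of implicitly reading Assumption \ref{assump:IndependenceDiffcoordsProjections} as joint independence of the eigenvector coordinates from the full image data $(\proj_s,\vol_s,\noise_s)$, precisely the coupling you object to. So your route is arguably more honest about where the difficulty lies, but as written it proves the theorem under a strictly larger hypothesis set; to match the statement you would either need to adopt the paper's stronger independence reading, or restate the theorem to include the eigenfunction-convergence assumption.
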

The proof is in \ref{sec:proofs}.

\subsection{Convergence of the reconstructed volumes} \label{sec:convergence}
Consider the graph Laplacian eigenvectors \( \{ \geig^{(\ell)} \}_{\ell=0}^{n-1}\) computed from  the low-resolution reconstruction coordinates \(\estcoords_1, \ldots, \estcoords_n  \in \coordsupp\). Several variants of the discrete graph Laplacian are known to converge to a continuous linear operator on \( \coordsupp \).
This convergence is not only pointwise, but also spectral, meaning that the eigenvectors of the graph Laplacian converge to the eigenfunctions \( \meig^{(\ell)} \) of this operator \cite{VonluxburgBelkinBousquet2008,RosascoBelkinDevito2010,LeeIzbicki2016}.
In particular cases, the limiting operator  is the continuous Laplacian, but more generally it is a weighted Laplacian operator, or Fokker--Planck operator, which  has an additional drift term towards, or away from, high-density regions \cite{CoifmanLafon2006,NadlerLafonCoifmanKevrekidis2005,TingHuangJordan2010}.
For our theoretical analysis we only need spectral convergence towards some set of eigenfunctions, not necessarily the Laplacian eigenfunctions.
We formulate this requirement in the following assumption.
For simplicity, we ignore possible eigenvalue multiplicities.
\begin{assumption} \label{assump:eigenconvergence}
    The domain $\coordsupp$ is compact and the eigenvectors \( \geig^{(0)}, \geig^{(1)}, \ldots \in \Real^{n} \) of the graph Laplacian, ordered by their eigenvalues, converge in probability to a set of  eigenfunctions  \( \meig^{(0)}, \meig^{(1)}, \ldots:\coordsupp \to \Real   \) of some continuous linear differential operator on $\coordsupp$, in the sense that
    \begin{eqnarray} \label{eq:eigenconvergence}
        \sup_{s=1,\ldots,n}
        | \sqrt{n}\geigs^{(\ell)} - \meig^{(\ell)}(\estcoords_s)| \to 0.
    \end{eqnarray}
    Furthermore, \( \{ \meig^{(\ell)} \} \) form an orthonormal set with respect to the measure $\dist(\coordsupp)$,
    \begin{eqnarray}
        \langle \meig^{(\ell)}, \meig^{(m)} \rangle
        =
        \int_{\coordsupp} \meig^{(\ell)}(\coords) \meig^{(m)}(\coords) \mathrm{d}\dist(\coords) = \delta_{\ell,m}.
    \end{eqnarray}
\end{assumption}
\begin{remark}
    Under this assumption, the eigenfunctions \( \{ \meig^{(\ell)} \} \) have upper bounds, which we denote as $U_\ell$.
    This is due to the fact that they are continuous functions on a compact domain.
    It follows that,
    \begin{equation}
        \sum_{s=1}^n \left( \geigs^{(\ell)} \right)^4
        \to
        \sum_{s=1}^n \left( \frac{1}{\sqrt{n}} \meig^{(\ell)}(\estcoords_s)
\right)^4
        \le
        \frac{1}{n} U_\ell^4.
    \end{equation}
    Thus, Assumption \ref{assump:EigenvectorBound} follows from Assumption \ref{assump:eigenconvergence}.
\end{remark}
\begin{remark}
    The $\sqrt{n}$ term in \eqref{eq:eigenconvergence} is necessary for the eigenvector normalization, since
    \begin{eqnarray}
        \sum_{s=1}^n \left( \geigs^{(\ell)} \right)^2
        \to
        \sum_{s=1}^n \left( \frac{1}{\sqrt{n}}\meig^{(\ell)}(\estcoords_s) \right)^2
        \to
        \int_\coordsupp \left( \meig^{(\ell)}(\coords) \right)^2 \mathrm{d}\dist(\coords) = 1.
    \end{eqnarray}
\end{remark}
\noindent By Propositions 2.1 and 2.2 of \cite{KatsevichKatsevichSinger2015}, the low-resolution mean and covariance estimates are consistent.
However, unlike these aggregate quantities, the PCA coordinates $\estcoords_s$ are computed from a single image, so they must contain an irreducible error term due to the finite noise level.
We codify this in the following assumption.
\begin{assumption} \label{assump:PCACoordsConvergence}
    The estimated PCA coordinates are correct up to some stochastically bounded noise term, 
    \begin{eqnarray} \label{eq:LowResReconsError}
        \estcoords_s =  \coords(\vol_s) + O_P(1).
    \end{eqnarray}
\end{assumption}
We now show that, up to noise, the spectral volumes are merely voxel-wise orthogonal projections of the
true volumes $\vol_1, \ldots, \vol_n$  onto a basis of eigenfunctions.
\begin{corollary} \label{cor:dvolestconvergence}
    Under Assumptions \ref{assump:IndependenceDiffcoordsProjections}, \ref{assump:eigenconvergence} and \ref{assump:PCACoordsConvergence} it follows from Theorem \ref{thm:specvolconvergence}
that
    \begin{eqnarray}
        \dvolest^{(\ell)}
        =
        \Expect[\meig^{(\ell)}( \coords(\vol) + O_{P}(1)) \vol]
        +
        \bigOprob\left(\frac{1}{\sqrt{n}}\right)
    \end{eqnarray}
    where the  expectation is with respect to the distribution of $\vol \in \M$.
    \end{corollary}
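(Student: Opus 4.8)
The plan is to obtain Corollary \ref{cor:dvolestconvergence} as a direct consequence of Theorem \ref{thm:specvolconvergence} by substituting the spectral-convergence and coordinate-consistency assumptions into its leading term. Theorem \ref{thm:specvolconvergence} already gives
\[
    \dvolest^{(\ell)}
    =
    \Expect\left[\frac{1}{\sqrt{n}}\sum_{s=1}^n \geigs^{(\ell)} \vol_s^{}\right]
    +
    \bigOprob\left(\frac{1}{\sqrt{n}}\right),
\]
so the entire task reduces to showing that the deterministic expectation on the right equals $\Expect[\meig^{(\ell)}(\coords(\vol) + \bigOprob(1))\vol]$ up to an error of order $\bigOprob(1/\sqrt{n})$.

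First I would rewrite the normalized sum as an empirical average, $\frac{1}{\sqrt{n}}\sum_s \geigs^{(\ell)} \vol_s = \frac{1}{n}\sum_s (\sqrt{n}\,\geigs^{(\ell)})\vol_s$, and invoke Assumption \ref{assump:eigenconvergence}: by \eqref{eq:eigenconvergence} we have $\sqrt{n}\,\geigs^{(\ell)} = \meig^{(\ell)}(\estcoords_s) + \littleOprob(1)$ uniformly over $s$. Substituting this replaces the unknown eigenvector entries by the limiting eigenfunction evaluated at the estimated coordinates, leaving $\frac{1}{n}\sum_s \meig^{(\ell)}(\estcoords_s)\vol_s$ plus a remainder that I will argue vanishes.

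Next I would exploit the i.i.d.\ structure of the data: since the pairs $(\vol_s, \estcoords_s)$ are identically distributed, the expectation of the empirical average collapses to a single expectation, $\Expect[\meig^{(\ell)}(\estcoords_s)\vol_s]$ for any fixed $s$. Applying Assumption \ref{assump:PCACoordsConvergence}, namely $\estcoords_s = \coords(\vol_s) + \bigOprob(1)$ from \eqref{eq:LowResReconsError}, turns this into $\Expect[\meig^{(\ell)}(\coords(\vol) + \bigOprob(1))\vol]$, the expectation now being over the volume distribution on $\M$ together with the irreducible coordinate noise. Collecting the approximation remainders with the $\bigOprob(1/\sqrt{n})$ term inherited from Theorem \ref{thm:specvolconvergence} then yields the claimed identity.

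The main obstacle will be justifying that the in-probability substitution of Assumption \ref{assump:eigenconvergence} may be carried out inside the deterministic expectation of Theorem \ref{thm:specvolconvergence}, since convergence in probability does not in general commute with expectation. I would resolve this by upgrading to bounded convergence: as noted in the remark following Assumption \ref{assump:eigenconvergence}, each $\meig^{(\ell)}$ is bounded on the compact domain $\coordsupp$ by some $U_\ell$, and the volumes $\vol_s$ are bounded densities on a fixed grid, so the relevant integrands are uniformly dominated and the expectation of the spectral-substitution remainder tends to zero. The secondary point requiring care is order bookkeeping: this remainder is a vanishing deterministic term, and to fold it cleanly into the stated $\bigOprob(1/\sqrt{n})$ one relies on the spectral convergence in \eqref{eq:eigenconvergence} occurring at the natural $1/\sqrt{n}$ rate dictated by the normalization, the dominant stochastic error remaining the $\bigOprob(1/\sqrt{n})$ from Theorem \ref{thm:specvolconvergence}. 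Assumption \ref{assump:IndependenceDiffcoordsProjections} is not invoked directly in this argument but enters through Theorem \ref{thm:specvolconvergence}, on which everything rests.
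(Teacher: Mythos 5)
Your proposal is correct and follows essentially the same route the paper intends: the paper states this corollary without an explicit proof, treating it as immediate from substituting Assumption \ref{assump:eigenconvergence} (to replace $\sqrt{n}\,\geigs^{(\ell)}$ by $\meig^{(\ell)}(\estcoords_s)$ inside the expectation from Theorem \ref{thm:specvolconvergence}) and Assumption \ref{assump:PCACoordsConvergence} (to replace $\estcoords_s$ by $\coords(\vol_s)+\bigOprob(1)$), exactly as you do. Your added care about dominated convergence and the rate of the spectral remainder goes beyond what the paper records; as you note, Assumption \ref{assump:eigenconvergence} only gives an $\littleOprob(1)$ rate, so strictly the residual is $\littleOprob(1)+\bigOprob(1/\sqrt{n})$ rather than $\bigOprob(1/\sqrt{n})$, a looseness already present in the paper's statement itself.
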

So far we have treated the convergence of the spectral volumes.
We now turn to the convergence of the high-resolution reconstructions.
As discussed in Section \ref{sec:SpectralGraphs}, we assume that the manifold of the molecular volumes can be well approximated by a small number of eigenfunctions.  We define this notion precisely in the following assumption.
\begin{assumption} \label{assump:CoefficientDecay}
    There is a set of spectral volumes $\dvol^{(0)}, \ldots, \dvol^{(r-1)}$ and a non-negative function $h(r)$ that satisfies $h(r) \to 0$ such that $h(r)$ bounds the approximation of $\M$ by $r$ spectral volumes:
    \begin{eqnarray} \label{eq:LowdimModel}
        \left\| \vol - \sum_{\ell=0}^{r-1} \dvol^{(\ell)} \meig^{(\ell)}(\coords(\vol)) \right\| = O(h(r)) \qquad \forall \vol \in \M.
    \end{eqnarray}
\end{assumption}
\noindent In that case, we can prove that the true volumes are recovered up to noise.
\begin{theorem} \label{thm:HighResConvergence}
    Consider a sample from a manifold that conforms to \eqref{eq:LowdimModel}, then it follows from Assumptions \ref{assump:IndependenceDiffcoordsProjections}, \ref{assump:eigenconvergence}, \ref{assump:PCACoordsConvergence}  and \ref{assump:CoefficientDecay} that as \( n \to \infty \) we have
\begin{eqnarray}
    \qquad\qquad
    \fl \volest_s = \vol_s + \sum_{\ell=0}^{r-1} O_P(C_\ell)\dvol^{(\ell)} + O_P(h(r)) \left(\sum_{\ell=0}^{r-1} \meig^{(\ell)}(\coords(\vol_s))+ O_P(C_\ell)\right),
\end{eqnarray}
where $C_\ell$ is an upper bound on the norm of the gradient of $\meig^{(\ell)}$.

\end{theorem}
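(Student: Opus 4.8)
The plan is to start from the definition of the high-resolution reconstruction, $\volest_s = \sqrt{n}\sum_{\ell=0}^{r-1}\geigs^{(\ell)}\dvolest^{(\ell)}$ in \eqref{eq:highresvolest}, and substitute controlled expansions for both the eigenvector factor $\sqrt{n}\geigs^{(\ell)}$ and the spectral-volume factor $\dvolest^{(\ell)}$. First I would handle the eigenvector factor: Assumption~\ref{assump:eigenconvergence} gives $\sqrt{n}\geigs^{(\ell)} = \meig^{(\ell)}(\estcoords_s) + o(1)$ uniformly in $s$ via \eqref{eq:eigenconvergence}, and Assumption~\ref{assump:PCACoordsConvergence} gives $\estcoords_s = \coords(\vol_s) + O_P(1)$. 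A first-order Taylor expansion of $\meig^{(\ell)}$ about $\coords(\vol_s)$, whose remainder is controlled by the gradient norm, then yields $\sqrt{n}\geigs^{(\ell)} = \meig^{(\ell)}(\coords(\vol_s)) + O_P(C_\ell)$, with $C_\ell$ the stated bound on the gradient of $\meig^{(\ell)}$.

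Next I would control the spectral-volume factor using Corollary~\ref{cor:dvolestconvergence}, which states $\dvolest^{(\ell)} = \Expect[\meig^{(\ell)}(\coords(\vol)+O_P(1))\vol] + O_P(1/\sqrt{n})$. Expanding $\meig^{(\ell)}(\coords(\vol)+O_P(1)) = \meig^{(\ell)}(\coords(\vol)) + O_P(C_\ell)$ as above, and inserting the finite-basis decomposition $\vol = \sum_m \dvol^{(m)}\meig^{(m)}(\coords(\vol)) + O(h(r))$ from Assumption~\ref{assump:CoefficientDecay}, the orthonormality of the limiting eigenfunctions in Assumption~\ref{assump:eigenconvergence} collapses $\Expect[\meig^{(\ell)}\meig^{(m)}] = \delta_{\ell,m}$. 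This gives $\Expect[\meig^{(\ell)}(\coords(\vol))\vol] = \dvol^{(\ell)} + O(h(r))$, identifying the Assumption~\ref{assump:CoefficientDecay} coefficients with the orthogonal projections up to the truncation scale, and hence $\dvolest^{(\ell)} = \dvol^{(\ell)} + O_P(C_\ell) + O_P(h(r))$, where the $O_P(C_\ell)$ piece is the irreducible coordinate-noise bias and the $O_P(h(r))$ piece is the finite-basis truncation error.

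Finally I would substitute both expansions into \eqref{eq:highresvolest} and split $\volest_s = \sum_\ell \sqrt{n}\geigs^{(\ell)}\dvol^{(\ell)} + \sum_\ell \sqrt{n}\geigs^{(\ell)}(\dvolest^{(\ell)}-\dvol^{(\ell)})$. In the first sum the eigenvector expansion of step one produces $\sum_\ell \meig^{(\ell)}(\coords(\vol_s))\dvol^{(\ell)} + \sum_\ell O_P(C_\ell)\dvol^{(\ell)}$, where Assumption~\ref{assump:CoefficientDecay} identifies the leading piece as $\vol_s + O(h(r))$ and the second piece is exactly the stated $\sum_\ell O_P(C_\ell)\dvol^{(\ell)}$ term. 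In the residual sum, the truncation part of $\dvolest^{(\ell)}-\dvol^{(\ell)}$ contributes $O_P(h(r))\sum_\ell\sqrt{n}\geigs^{(\ell)} = O_P(h(r))\sum_\ell(\meig^{(\ell)}(\coords(\vol_s)) + O_P(C_\ell))$, which is the final error term, while the coordinate-noise part (multiplied by $\sqrt{n}\geigs^{(\ell)} = O_P(1)$) produces further $O_P(C_\ell)$-scale vectors that I would collect into the stated $O_P(C_\ell)$ term.

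I expect the main obstacle to be the careful bookkeeping of the error orders rather than any single hard estimate. Two points need attention: establishing that the Taylor remainders are controlled uniformly over all $n$ samples — the $\sup_s$ in \eqref{eq:eigenconvergence} is essential here, since the coordinate perturbation $O_P(1)$ does not vanish and only the gradient bound $C_\ell$ keeps the induced error finite — and correctly grouping the cross terms so that the truncation error $h(r)$ factors out of the full eigenvector sum $\sum_\ell\sqrt{n}\geigs^{(\ell)}$ rather than attaching only to the limiting eigenfunction values. The independence Assumption~\ref{assump:IndependenceDiffcoordsProjections} enters solely through Corollary~\ref{cor:dvolestconvergence}, so beyond invoking that corollary no additional probabilistic argument is required.
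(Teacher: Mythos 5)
Your proposal follows essentially the same route as the paper's proof: expand $\sqrt{n}\geigs^{(\ell)}$ via Assumptions~\ref{assump:eigenconvergence} and \ref{assump:PCACoordsConvergence} plus the gradient bound $C_\ell$, expand $\dvolest^{(\ell)}$ via Corollary~\ref{cor:dvolestconvergence}, Assumption~\ref{assump:CoefficientDecay} and the orthonormality $\Expect[\meig^{(\ell)}\meig^{(m)}]=\delta_{\ell,m}$ to get $\dvolest^{(\ell)}=\dvol^{(\ell)}+O_P(h(r))$, then multiply out and reuse Assumption~\ref{assump:CoefficientDecay} on the leading sum. The only (minor) divergence is that you retain an extra $O_P(C_\ell)$ coordinate-noise bias inside the expansion of $\dvolest^{(\ell)}$, a cross term the paper's proof silently absorbs; keeping it is arguably more careful, though folding it into the stated $O_P(C_\ell)\dvol^{(\ell)}$ term requires the same looseness the paper already employs.
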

The proof of this theorem is in \ref{sec:proofs}.
Note that the first error term contains an  irreducible error from the finite level of noise in the PCA coordinate assignment.
\section{Results} \label{sec:Results} 

In this section, we apply our method to several synthetic datasets with a low-dimensional conformation space.
We first consider clean images of a clock face with a rotating hand, then more realistic datasets of molecular volumes with one- and two-dimensional motions.

\subsection{Clock dataset} \label{sec:2DClock}
\begin{figure}
    \centering
    \includegraphics[width=3.13cm]{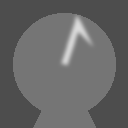}
    \includegraphics[width=3.13cm]{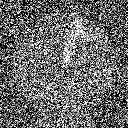}
    \includegraphics[width=3.13cm]{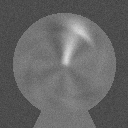}\\ \vspace{1.1mm}
    \includegraphics[width=3.13cm]{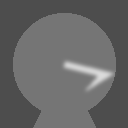}
    \includegraphics[width=3.13cm]{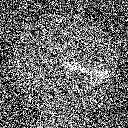}
    \includegraphics[width=3.13cm]{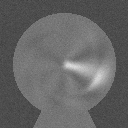}
    \caption{\small (left column) Clean 2D clock faces;
    (middle column) noisy clock faces, used as inputs to the reconstruction algorithm;
    (right column) corresponding reconstructions using $r=15$ spectral volumes.}
    \label{fig:clock2d}
\end{figure}

We begin with a toy model of a 2D clock face with a single moving hand. 
Since the objects we wish to reconstruct are images rather than volumes, no projections are involved.
This is the setting studied in Section \ref{sec:TheoryNoProj}.
The simulated dataset is comprised of $n=10^4$ noisy images ${\bf z}_1, \ldots, {\bf z}_n \in \Real^{N \times N}$ with $N=128$, where each image shows the clock hand at a random angle with additive Gaussian noise.
The affinity matrix was chosen to be
\(
    W_{ij} = \euler^{-\|{\bf z}_i - {\bf z}_j \|^2 / N^2 \sigma^2}
\)
where $\sigma$ is the standard deviation of the Gaussian noise. 
We then constructed a normalized graph Laplacian, extracted its eigenvectors, and computed $r=15$ spectral volumes  by solving the least-squares problem of \eref{eq:DiffvolNoProj}.
Figure \ref{fig:clock2d} shows representative input images and their corresponding reconstructions.
Figure \ref{fig:clock2dspecvols} shows the estimated spectral volumes (spectral images in this case).
We need a large value of $r$ to get good reconstructions, since the clock hand  has sharp discontinuities.

\begin{figure}
    \qquad\quad\ \ 
    \includegraphics[height=3.13cm]{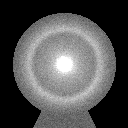}
    \includegraphics[height=3.13cm]{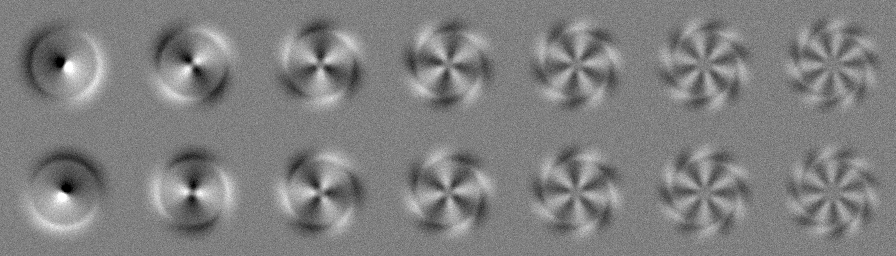}
    \caption{\small Spectral volumes $\dvolest^{(0)}, \ldots, \dvolest^{(15)}$ of the 2D clock. (left) First spectral volume $\dvol^{(0)}$ which converges to the mean; (right) Other spectral
volumes ordered vertically in pairs of the same eigenvalue. Eigenvalues increase from left to right.}
    \label{fig:clock2dspecvols}
\end{figure}

To analyze this example we note that the clock dataset has the manifold geometry of the unit circle $S^1$.
Ignoring an arbitrary phase offset, the set of real eigenfunctions of the Laplace--Beltrami operator on $S^1$ are $\meig^{(0)}(\theta) = 1/2\pi$ and for all integer
$\ell \ge 1$, \( \meig^{(2\ell-1)}(\theta) = \sqrt{1/\pi} \sin(\ell \theta) \) and \( \meig^{(2\ell)}(\theta) = \sqrt{1/\pi} \cos(\ell \theta) \).
It follows from \eref{eq:SpecVolsSolWithoutProj} that
\begin{eqnarray}
    \dvolest^{(\ell)}
    \to
    \Expect[\meig^{(\ell)}({\bf z}) {\bf z}]
    =
    \int_{{\bf z} \sim \M} \meig^{(\ell)}({\bf z}) {\bf z} \mathrm{d}{\bf z}.
\end{eqnarray}
Let ${\bf z}_\theta$ denote the image with the clock hand at angle $\theta$, we may rewrite the above as
\begin{eqnarray}
    \dvolest^{(\ell)}
    \to
    \int_0^{2\pi} \meig^{(\ell)}(\theta) {\bf z}_\theta \mathrm{d}\theta.
\end{eqnarray}
Rather than fixing a pixel and rotating the clock hand, we may fix the clock hand and rotate the pixel in the other direction.
For pixels $[x,y]$ inside the disk of the clock face,
\begin{eqnarray}
    \dvolest^{(\ell)}[x,y]
    \to
    \int_0^{2\pi} \meig^{(\ell)}(\theta) {\bf z}_{\theta=0}[R_{-\theta}[x,y]^T] \mathrm{d}\theta.
\end{eqnarray}
We conclude that in the case of simple rotation heterogeneity, a pixel of the $\ell$\textsuperscript{th} spectral volume in the rotating domain converges to the $\ell$\textsuperscript{th} Fourier coefficient of the function $f(\theta) = {\bf z}_0[R_{-\theta}[x,y]]$.
Put differently, the coefficients $\dvolest^{(0)}[x,y], \dvolest^{(1)}[x,y], \ldots$ converge to the Fourier coefficients of the rotating domain, in polar representation.

We tested a similar clock dataset in 3D, using the same clock hand shape, this time with tomographic projections.
The results we obtained are similar to the results of the 2D clock dataset, in accordance with Corollary \ref{cor:dvolestconvergence}.
See \ref{sec:3DClock} for details.

\subsection{Simulated ion channel} \label{sec:SimIonChannel}
We created two synthetic datasets based on a voltage-gated potassium channel, shown in Figure \ref{fig:FakeKVorig}.
The first dataset \textsf{ChannelSpin} demonstrates a rotational motion of the top part about the $z$ axis.
The second dataset \textsf{ChannelStretch} demonstrates a nonrigid stretching of the bottom part.
Specifically, each slice was displaced along the $x$-$y$
plane by an amount that is proportional to the squared distance of the slice from the center of the molecule.
We used a spatial resolution of $\imsize = 108$ and generated $n = 10,000$ volumes for each dataset.
The angles of rotation in the \textsf{ChannelSpin} dataset were drawn uniformly, which gives a conformational manifold diffeomorphic to the circle.
For the \textsf{ChannelStretch} dataset, we drew random displacements $\delta_x, \delta_y \in \{-16,-15,\ldots,16\}$ that parameterize a non-rigid shift of every $x$-$y$ slice in the bottom half of the molecule.
Let \( {\bf v} \in \mathbb{R}^{\imsize \times \imsize \times \imsize} \) be the original (unstretched) ion channel, the stretched ion channel \( \bf v' \) is defined for every $0 \le z \le \imsize/2$ by   
\begin{eqnarray}
    {\bf v}'[x,y,z] = {\bf v}[x + \delta_x s_z, y + \delta_y s_z, z] \quad \mathrm{where}\quad s_z = \left(\frac{\imsize/2-z}{\imsize/2-z_0}\right)^2
\end{eqnarray}
Note that $z=N/2$ is the center of mass of the ion channel and $z_0=16$ is the bottom of the molecule.

In both datasets, we projected the molecules along random orientations, drawn uniformly from $\SO(3)$.
We then applied a simulated point spread function with a defocus value chosen uniformly at random from $1.50$, $1.67$, $1.83$, $2.00$, $2.17$, $2.33$, or $2.50$ microns.
Finally, we added white Gaussian noise, with a variance chosen such that the total energy of the noise was 30 times that of the total energy
of each clean image.
No  in-plane shift was applied.
See Figure \ref{fig:FakeKVorig} for  example images.

Using the projection images, we ran the covariance estimation method with $q = 4$ components to build the adjacency matrix for the \textsf{ChannelSpin} dataset and $q = 8$ for the \textsf{ChannelStretch} dataset.
We then reconstructed the volumes using $r=1, \ldots, 15$ spectral volumes.
We used the true orientations of the projection images for both the covariance and spectral volume estimation procedures.

\paragraph{Examining the Laplacian eigenmaps embedding}
\begin{figure}
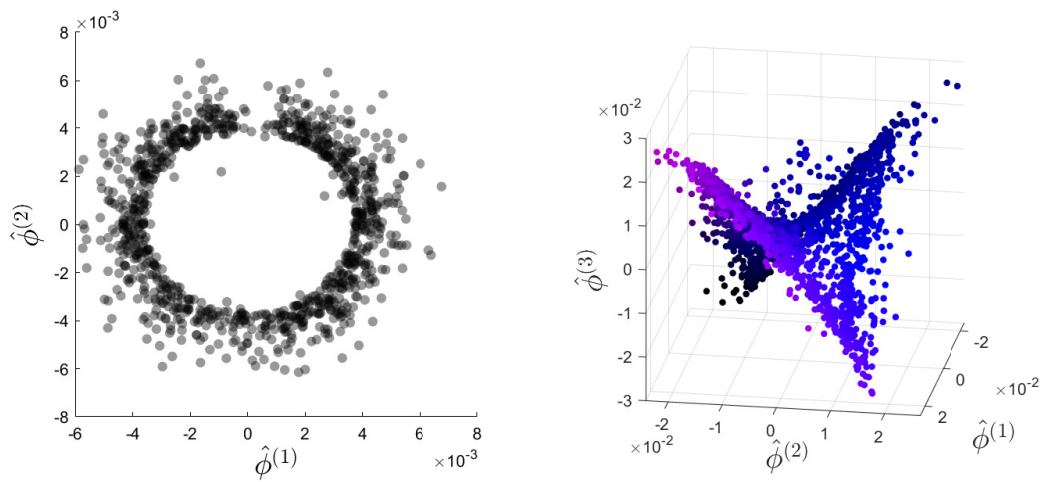

    \begin{center}\qquad\qquad
        \includegraphics[width=69mm]{\detokenize{mani_figure_circle}}\ 
        \includegraphics[width=69mm]{\detokenize{mani_figure_square_123}}
        \caption{\label{fig:ManiEmbeddings} Laplacian eigenmaps embedding of the ion channel datasets. (left) scatter plot of 1000 samples from the \textsf{ChannelSpin} dataset, showing the first two nontrivial Laplacian eigenvector coordinates. (right) 3D scatter plot of 2000 samples from the \textsf{ChannelStretch} dataset, showing the first three nontrivial eigenvectors. This dataset forms a saddle over a 2D square. The blue component of the color is given by the position along the line $\geigs^{(1)} = \geigs^{(2)}$, whereas the red component is given by the position along the line $\geigs^{(1)} = -\geigs^{(2)}$.}
    \end{center}
\end{figure}

Figure \ref{fig:ManiEmbeddings} shows the embeddings of a random sample from the \textsf{ChannelSpin} and \textsf{ChannelStretch} datasets.   The embedding of  \textsf{ChannelSpin} clearly shows a circle whereas the embedding of the \textsf{ChannelStretch} dataset shows a 2-dimensional square in the $ \geigs^{(1)}-\geigs^{(2)}$ plane that is shaped like a saddle.  Both of these results are in accordance with the underlying motion manifold.

\paragraph{Examining the spectral volumes}

Figure \ref{fig:RotSpecVols} shows the first few spectral volumes.
For the \textsf{ChannelSpin} dataset, as expected $\dvolest^{(0)}$ captures the mean over all rotations.
Higher order spectral volumes have increasing angular frequency, capturing more and more detail.
Note that $\dvolest^{(1)}$ and $\dvolest^{(2)}$ are a quarter period out of phase, while $\dvolest^{(3)}$ has twice the angular frequency.
Due to  the $C_4$ symmetry of the ion channel, the lowest frequency of the \textsf{ChannelSpin} dataset has a period of 90 degrees.
\begin{figure}[t]
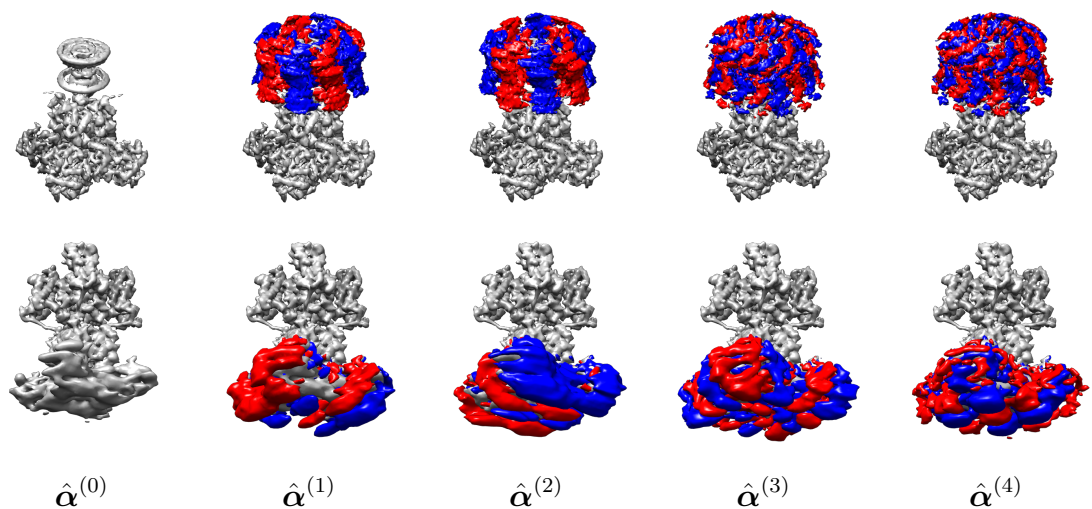

    \qquad
    \begin{tabular}{ccccc}
        \includegraphics[width=26mm]{\detokenize{spinning_alpha_1_pic}}
        &
        \includegraphics[width=26mm]{\detokenize{spinning_alpha_2_pic}}
        &
        \includegraphics[width=26mm]{\detokenize{spinning_alpha_3_pic}}
        &
        \includegraphics[width=26mm]{\detokenize{spinning_alpha_4_pic}}
        &
        \includegraphics[width=26mm]{\detokenize{spinning_alpha_5_pic}}
        \vspace{-0.5em}\\
        \includegraphics[width=26mm]{\detokenize{alpha_0_pic}}
        &
        \includegraphics[width=26mm]{\detokenize{alpha_1_pic}}
        &
        \includegraphics[width=26mm]{\detokenize{alpha_2_pic}}
        &
        \includegraphics[width=26mm]{\detokenize{alpha_3_pic}}
        &
        \includegraphics[width=26mm]{\detokenize{alpha_4_pic}}
        \\
        $\dvolest^{(0)}$
        &
        $\dvolest^{(1)}$
        &
        $\dvolest^{(2)}$
        &
        $\dvolest^{(3)}$
        &
        $\dvolest^{(4)}$
    \end{tabular}
    \caption{\label{fig:RotSpecVols}
    Spectral volumes computed from the \textsf{ChannelSpin} dataset (top) and \textsf{ChannelStretch} dataset (bottom).  The zeroth spectral volume $\dvolest^{(0)}$ (grey) is shown on the  left.  The next four figures, from left to right, show higher order spectral volumes, superimposed  over $\dvolest^{(0)}$.  Red and blue represent negative and positive values of the higher-order spectral volume, respectively.
    The two data sets are viewed from different angles.}
\end{figure}

For the \textsf{ChannelStretch} dataset, we see  that $\dvolest^{(0)}$ captures the fixed part of the molecule with high resolution and shows a ``smeared" bottom portion.
The first and second nontrivial spectral volumes each have a low spatial frequency along the $x$ and $y$ axes.
Higher spectral volumes show higher spatial frequencies.
$\dvolest^{(3)}$ shows a mix of the directions of $\dvolest^{(1)}$ and $\dvolest^{(2)}$ whereas $\dvolest^{(4)}$ is similar to $\dvolest^{(1)}$ but with a double spatial frequency. Recall that by Corollary \ref{cor:dvolestconvergence} up to noise the spectral volumes are $\Expect[\meig^{(\ell)}( \coords(\vol)) \vol]$. In this case the eigenfunctions of the Laplacian on the square are the 2D discrete cosine transform basis functions, which are up to scale $ \meig^{(n_x,n_y)} = \cos(n_x x)\cos(n_y y)$ with eigenvalue $\propto n_x^2 + n_y^2$, See \cite[Section 3.1]{GrebenkovNguyen2013}.
This  agrees with our empirical observations.

\paragraph{Reconstruction accuracy}
\begin{figure}[t]
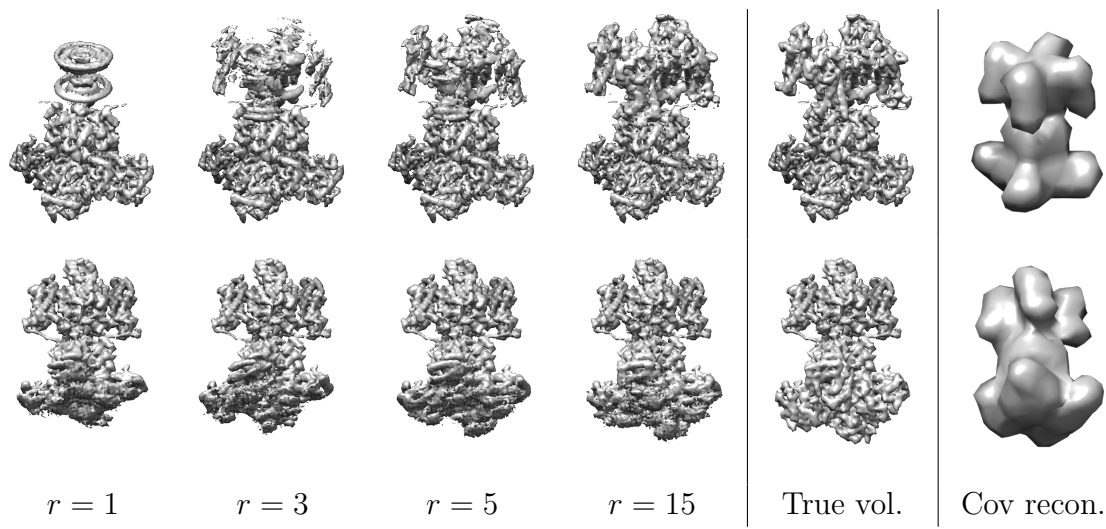

        \qquad
        \begin{tabular}{cccc|c|c}
            \includegraphics[width=21mm]{\detokenize{recon_25000_r_1_crop}}
            &
            \includegraphics[width=21mm]{\detokenize{recon_25000_r_3_crop}}
            &
            \includegraphics[width=21mm]{\detokenize{recon_25000_r_5_crop}}
            &
            \includegraphics[width=21mm]{\detokenize{recon_25000_r_15_crop}}
            &
            \includegraphics[width=21mm]{\detokenize{orig_27_crop}}
            &
            \includegraphics[width=21mm]{\detokenize{recon_cov_crop}}
            \\
            \includegraphics[width=21mm]{\detokenize{square_recon_0}}
            &
            \includegraphics[width=21mm]{\detokenize{square_recon_2}}
            &
            \includegraphics[width=21mm]{\detokenize{square_recon_4}}
            &
            \includegraphics[width=21mm]{\detokenize{square_recon_14}}
            &
            \includegraphics[width=21mm]{\detokenize{square_orig_679}}
            &
            \includegraphics[width=21mm]{\detokenize{cov_recon_679}}
            \\
            $r=1$ & $r=3$ & $r=5$ & $r=15$ & True vol. & Cov recon.
        \end{tabular}
        \caption{\small  \label{fig:OrigAndRecons}
                Reconstructed volumes from \textsf{ChannelSpin} (top row) and \textsf{ChannelStretch} (bottom row), using $r \in \{1, 3, 5, 15\}$ spectral volumes.
 Also shown are the low-resolution reconstructions of the covariance-based method described in Section \ref{sec:LowResHeterogeneity}.
        }
\end{figure}

Figure \ref{fig:OrigAndRecons} shows reconstructions with increasing numbers of spectral volumes alongside the original simulated volume.
Note that the reconstructions for the \textsf{ChannelSpin} volume are of higher quality than for the \textsf{ChannelStretch} volume.
This is expected, since the manifold of conformations of \textsf{ChannelSpin} is one-dimensional whereas \textsf{ChannelStretch} has two-dimensional motion. Hence more samples are needed to get a dense cover of the conformational manifold.

To quantify the accuracy of our reconstructions, we use the \textit{Fourier shell correlation} (FSC), which is the standard evaluation criterion in the cryo-EM literature \cite{HendersonEtal2012}.
Given two volumes $\vol_1, \vol_2$, the FSC takes their Fourier transforms and computes the correlation between each frequency shell.
Because we wish to estimate the quality of reconstructing the variable part of the molecule, instead of reporting the FSC between the reconstructed volumes and the original volumes, we report the mean-subtracted FSC
\begin{eqnarray}
    \mathrm{FSC}\left(\vol_s - \mean, \sum_{\ell = 1}^{r-1} \geigs^{(\ell)} \dvolest^{(\ell)}\right),
\end{eqnarray}
Figure \ref{fig:FSCs} shows the results for each simulated dataset.  As expected, the  reconstruction quality increases with the number of spectral volumes.
The reconstruction of the high frequencies is less accurate than that of the low frequencies.
In both cases, as $r$ increases, the FSC curves converge, suggesting a number after which more spectral volumes yield diminishing returns.
\begin{figure}
    \qquad\quad\ 
    \includegraphics[width=64mm]{\detokenize{hollow_fscs_snr_30}}\qquad
    \includegraphics[width=64mm]{\detokenize{hollow_fscs_snr_30_square}}
    \caption{\label{fig:FSCs}
                FSC curves for $r=2,\ldots,16$ from bottom to top, comparing reconstructed volumes to originals.
                Each curve is the FSC of $\vol_s-\mean$ and $\sum_{\ell=1}^{r}  \geigs^{(\ell)} \dvolest^{(\ell)}$, averaged over $s=1,\ldots,n$.
                (left) \textsf{ChannelSpin}; (right) \textsf{ChannelStretch}.  The bottom $x$ axis denotes spatial frequency, and the top $x$ axis the corresponding wavelength, so that the rightmost position is the Nyquist frequency and a wavelength of 2 pixels.}
\end{figure}

\subsection{Runtime}\label{sec:Runtime}
Table 2 details the running time of our method for the \textsf{ChannelSpin} simulation with $N=108$, $r=15$, and $q=8$. The method is implemented in MATLAB 2017b and runs on 16 cores of a 2.3 GHz Intel Xeon CPU; memory usage was about 60 GB.

\begin{table}\small 
\caption{\small \label{table:RunTimes}Runtimes for the main steps of our method on the \textsf{ChannelSpin} dataset, with $n=10,000$ images of  $108 \times 108$ pixels.}
\begin{center}
\begin{tabular}{l|r}
    Procedure & Running time (sec) \\
    \hline
    Calculation of $\meanest$ & 624.6 \\
    Calculation of $\covest$ & 5044.7 \\
    Calculation $\eigvest$ & 0.8 \\
    Calculation of \{$\estcoords_s\}$ & 2084.8 \\
    Calculation of $\{\geig_s\}$ & 531.5 \\
    Calculation of $\wtskerest$ & 12378.0 \\
    Calculation of $\wtsrhsest$ & 4014.1 \\
    Estimation of $\{\dvolest^{(\ell)}\}_{\ell=0}^{15}$ & 1769.9 \\
    \hline
\end{tabular}
\end{center}
\end{table}

\section{Conclusion} \label{sec:Conclusion}
Today, rigid macromolecules are routinely reconstructed to near-atomic resolution using standard cryo-EM software tools.
However, the high-resolution reconstruction of molecular samples with continuous heterogeneity remains one of the grand challenges of the field.
This work describes a new method which addresses this challenge.
It combines spectral graph theory  with  recent techniques for covariance-based low-resolution reconstruction.
Our procedure computes conformation-dependent Laplacian eigenmap coordinates and then generates a set of spectral volumes that characterize the variability of the molecule under study.
Together these define a high-resolution 3D reconstruction for every projection image. 

In the context of machine learning, our method combines and extends two classical methods: (i) the low-resolution covariance-based reconstruction, which we use to form the affinity graph, may be viewed as a generalization of PCA, as it finds the principal volumes in the space of molecular conformations. Unlike PCA, the input is projection images rather than full observations. (ii) the construction of an affinity graph from the low-resolution reconstructions and the generalized tomographic reconstruction using the eigenvectors of the graph Laplacian.
This can be viewed as an extension of standard approaches for nonparametric regression, semi-supervised learning and matrix completion on graphs and manifolds (see for example \cite{BelkinNiyogi2004,ZhouSrebro2011,LeeIzbicki2016,MoscovichJaffeNadler2017,VilloutreixEtal2017}).
The key difference is that rather than partially-labeled or noisy observations,  we reconstruct a smooth high-dimensional function from noisy tomographic measurement.
We note that the combination of PCA and graph Laplacian representations has been used for dimensionality
reduction and denoising, for example in \cite{SingerWu2013,Singer2006a}.

Similar to the hyper-molecules method proposed by \cite{LedermanSinger2017,LedermanAndenSinger2019}, our method expands the molecular volumes which generated the projection images using a small set of basis volumes.
However, in the hyper-molecules model the basis volumes are obtained from a user-specified manifold.  Similarly, the multi-body refinement of RELION 3 \cite{NakaneEtal2018} requires that the user manually segment the molecule into components that exhibit motion relative to each other.  In contrast, our method is data-driven and requires no such user input.
It relies only on the assumption that the molecule deforms continuously in a manner that is  determined by a small number of parameters.

To conclude, in this paper we have described a method for the reconstruction of molecules with continuous heterogeneity, studied it theoretically and demonstrated the high-resolution reconstruction of synthetic data with one-dimensional and two-dimensional motion manifolds.
In future work, we will continue to scale up the method and apply it to the analysis of experimental datasets.

\section*{Software}
Code for computing spectral volume reconstructions and  producing the figures in this paper is available at \url{http://github.com/PrincetonUniversity/specvols}

\ack
We would like to thank Ronald R. Coifman and Mark Tygert for interesting discussions.
AS, AM, and AH were partially supported by NIGMS Award Number R01GM090200, AFOSR FA9550-17-1-0291, ARO W911NF-17-1-0512, Simons Investigator Award, the Moore Foundation Data-Driven Discovery Investigator Award, and NSF BIGDATA Award IIS-1837992.
The Flatiron Institute is a division of the Simons Foundation.
3D molecular graphics were rendered using  UCSF Chimera \cite{Chimera2004}.

\section*{References}

\bibliographystyle{unsrtwithdoi}
\bibliography{cryohet}

\begin{thebibliography}{10}

\bibitem{Frank2006}
Joachim Frank.
\newblock {\em {Three-Dimensional Electron Microscopy of Macromolecular
  Assemblies}}.
\newblock Oxford University Press, mar 2006.
\doi{10.1093/acprof:oso/9780195182187.001.0001}

\bibitem{VulovicEtAl2013}
Milo{\v{s}} Vulovi{\'{c}}, Raimond~B.G. Ravelli, Lucas~J. van Vliet, Abraham~J.
  Koster, Ivan Lazi{\'{c}}, Uwe L{\"{u}}cken, Hans Rullg{\aa}rd, Ozan
  {\"{O}}ktem, and Bernd Rieger.
\newblock {Image formation modeling in cryo-electron microscopy}.
\newblock {\em J. Struct. Biol.}, 183(1):19--32, jul 2013.
\doi{10.1016/j.jsb.2013.05.008}

\bibitem{Kuhlbrandt2014}
Werner K{\"u}hlbrandt.
\newblock {The Resolution Revolution}.
\newblock {\em Science (80-. ).}, 343(6178):1443--1444, mar 2014.
\doi{10.1126/science.1251652}

\bibitem{AmuntsEtal2014}
Alexey Amunts, Alan Brown, Xiao-chen Bai, Jose~L. Ll{\'a}cer, Tanweer Hussain,
  Paul Emsley, Fei Long, Garib Murshudov, Sjors H.~W. Scheres, and Venki
  Ramakrishnan.
\newblock {Structure of the Yeast Mitochondrial Large Ribosomal Subunit}.
\newblock {\em Science (80-. ).}, 343(6178):1485--1489, mar 2014.
\doi{10.1126/science.1249410}

\bibitem{LiaoEtal2013}
Maofu Liao, Erhu Cao, David Julius, and Yifan Cheng.
\newblock {Structure of the TRPV1 ion channel determined by electron
  cryo-microscopy}.
\newblock {\em Nature}, 504(7478):107--112, 2013.
\doi{10.1038/nature12822}

\bibitem{BartesaghiEtal2018}
Alberto Bartesaghi, et~al.
\newblock {Atomic Resolution {Cryo-EM} Structure of $\beta$-Galactosidase}.
\newblock {\em Structure}, 26(6):848--856.e3, 2018.
\doi{10.1016/j.str.2018.04.004}

\bibitem{BarnettGreengardPatakiSpivak2017}
Alex Barnett, Leslie Greengard, Andras Pataki, and Marina Spivak.
\newblock {Rapid Solution of the Cryo-EM Reconstruction Problem by Frequency
  Marching}.
\newblock {\em SIAM J. Imaging Sci.}, 10(3):1170--1195, jan 2017.
\doi{10.1137/16M1097171}

\bibitem{ChengEtal2015}
Yifan Cheng, Nikolaus Grigorieff, Pawel~A. Penczek, and Thomas Walz.
\newblock {A Primer to Single-Particle Cryo-Electron Microscopy}.
\newblock {\em Cell}, 161(3):438--449, 2015.
\doi{10.1016/j.cell.2015.03.050}

\bibitem{MilneEtal2012}
Jacqueline~L.S. Milne, Mario~J. Borgnia, Alberto Bartesaghi, Erin E.~H. Tran,
  Lesley~A. Earl, David~M. Schauder, Jeffrey Lengyel, Jason Pierson, Ardan
  Patwardhan, and Sriram Subramaniam.
\newblock {Cryo-electron microscopy -- a primer for the non-microscopist}.
\newblock {\em FEBS J.}, 280(1):28--45, 2012.
\doi{10.1111/febs.12078}

\bibitem{VinothkumarHenderson2016}
Kutti~R. Vinothkumar and Richard Henderson.
\newblock {Single particle electron cryomicroscopy: trends, issues and future
  perspective}.
\newblock {\em Q. Rev. Biophys.}, 49:1--25, jul 2016.
\doi{10.1017/S0033583516000068}

\bibitem{Scheres2012a}
Sjors~H.W. Scheres.
\newblock {A Bayesian View on Cryo-EM Structure Determination}.
\newblock {\em J. Mol. Biol.}, 415(2):406--418, jan 2012.
\doi{10.1016/j.jmb.2011.11.010}

\bibitem{YasudaEtal1998}
Ryohei Yasuda, Hiroyuki Noji, Kazuhiko Kinosita, and Masasuke Yoshida.
\newblock {F1-ATPase Is a Highly Efficient Molecular Motor that Rotates with
  Discrete 120° Steps}.
\newblock {\em Cell}, 93(7):1117--1124, jun 1998.
\doi{10.1016/S0092-8674(00)81456-7}

\bibitem{Scheres2012b}
Sjors~H.W. Scheres.
\newblock {RELION: Implementation of a Bayesian approach to cryo-EM structure
  determination}.
\newblock {\em J. Struct. Biol.}, 180(3):519--530, dec 2012.
\doi{10.1016/j.jsb.2012.09.006}

\bibitem{PunjaniEtal2017}
Ali Punjani, John~L. Rubinstein, David~J. Fleet, and Marcus~A. Brubaker.
\newblock {cryoSPARC: algorithms for rapid unsupervised cryo-EM structure
  determination}.
\newblock {\em Nat. Methods}, 14(3):290--296, mar 2017.
\doi{10.1038/nmeth.4169}

\bibitem{LyumkisEtal2013}
Dmitry Lyumkis, Axel~F. Brilot, Douglas~L. Theobald, and Nikolaus Grigorieff.
\newblock {Likelihood-based classification of cryo-EM images using FREALIGN}.
\newblock {\em J. Struct. Biol.}, 183(3):377--388, sep 2013.
\doi{10.1016/j.jsb.2013.07.005}

\bibitem{GrantRohouGrigorieff2018}
Timothy Grant, Alexis Rohou, and Nikolaus Grigorieff.
\newblock {cisTEM, user-friendly software for single-particle image
  processing}.
\newblock {\em Elife}, 7(3):377--388, mar 2018.
\doi{10.7554/eLife.35383}

\bibitem{TangEtal2007}
Guang Tang, Liwei Peng, Philip~R. Baldwin, Deepinder~S. Mann, Wen Jiang, Ian
  Rees, and Steven~J. Ludtke.
\newblock {EMAN2: An extensible image processing suite for electron
  microscopy}.
\newblock {\em J. Struct. Biol.}, 157(1):38--46, jan 2007.
\doi{10.1016/j.jsb.2006.05.009}

\bibitem{LiuFrank1995}
Weiping Liu and Joachim Frank.
\newblock {Estimation of variance distribution in three-dimensional
  reconstruction I Theory}.
\newblock {\em J. Opt. Soc. Am. A}, 12(12):2615, dec 1995.
\doi{10.1364/JOSAA.12.002615}

\bibitem{Penczek2002}
Pawel~A. Penczek.
\newblock {Variance in three-dimensional reconstructions from projections}.
\newblock In {\em Proc. IEEE Int. Symp. Biomed. Imaging}, pages 749--752. IEEE,
  2002.
\doi{10.1109/ISBI.2002.1029366}

\bibitem{PenczekEtal2006}
Pawel~A. Penczek, Chao Yang, Joachim Frank, and Christian~M.T. Spahn.
\newblock {Estimation of variance in single-particle reconstruction using the
  bootstrap technique}.
\newblock {\em J. Struct. Biol.}, 154(2):168--183, may 2006.
\doi{10.1016/j.jsb.2006.01.003}

\bibitem{PenczekFrankSpahn2006}
Pawel~A. Penczek, Joachim Frank, and Christian~M.T. Spahn.
\newblock {A method of focused classification, based on the bootstrap 3D
  variance analysis, and its application to EF-G-dependent translocation}.
\newblock {\em J. Struct. Biol.}, 154(2):184--194, may 2006.
\doi{10.1016/j.jsb.2005.12.013}

\bibitem{PenczekKimmelSpahn2011}
Pawel~A. Penczek, Marek Kimmel, and Christian~M.T. Spahn.
\newblock {Identifying Conformational States of Macromolecules by
  Eigen-Analysis of Resampled Cryo-EM Images}.
\newblock {\em Structure}, 19(11):1582--1590, nov 2011.
\doi{10.1016/j.str.2011.10.003}

\bibitem{LiaoFrank2010}
Hstau~Y. Liao and Joachim Frank.
\newblock {Classification by bootstrapping in single particle methods}.
\newblock In {\em 2010 IEEE Int. Symp. Biomed. Imaging From Nano to Macro},
  volume 29169, pages 169--172. IEEE, 2010.
\doi{10.1109/ISBI.2010.5490386}

\bibitem{KatsevichKatsevichSinger2015}
Eugene Katsevich, Alexander Katsevich, and Amit Singer.
\newblock {Covariance Matrix Estimation for the Cryo-EM Heterogeneity Problem}.
\newblock {\em SIAM J. Imaging Sci.}, 8(1):126--185, jan 2015.
\doi{10.1137/130935434}

\bibitem{AndenKatsevichSinger2015}
Joakim And{\'{e}}n, Eugene Katsevich, and Amit Singer.
\newblock {Covariance estimation using conjugate gradient for 3D classification
  in cryo-EM}.
\newblock In {\em 2015 IEEE 12th Int. Symp. Biomed. Imaging}, pages 200--204.
  IEEE, apr 2015.
\doi{10.1109/ISBI.2015.7163849}

\bibitem{AndenSinger2018}
Joakim And{\'e}n and Amit Singer.
\newblock {Structural Variability from Noisy Tomographic Projections}.
\newblock {\em SIAM J. Imaging Sci.}, 11(2):1441--1492, may 2018.
\doi{10.1137/17M1153509}

\bibitem{TagareEtal2015}
Hemant~D. Tagare, Alp Kucukelbir, Fred~J. Sigworth, Hongwei Wang, and Murali
  Rao.
\newblock {Directly reconstructing principal components of heterogeneous
  particles from cryo-EM images}.
\newblock {\em J. Struct. Biol.}, 191(2):245--262, aug 2015.
\doi{10.1016/j.jsb.2015.05.007}

\bibitem{DashtiEtAl2014}
Ali Dashti, et~al.
\newblock {Trajectories of the ribosome as a Brownian nanomachine}.
\newblock {\em Proc. Natl. Acad. Sci.}, 111(49):17492--17497, dec 2014.
\doi{10.1073/pnas.1419276111}

\bibitem{SchwanderFungOurmazd2014}
Peter Schwander, Russell Fung, and Abbas Ourmazd.
\newblock {Conformations of macromolecules and their complexes from
  heterogeneous datasets}.
\newblock {\em Philos. Trans. R. Soc. B Biol. Sci.}, 369(1647):1--8, jun 2014.
\doi{10.1098/rstb.2013.0567}

\bibitem{FrankOurmazd2016}
Joachim Frank and Abbas Ourmazd.
\newblock {Continuous changes in structure mapped by manifold embedding of
  single-particle data in cryo-EM}.
\newblock {\em Methods}, 100:61--67, may 2016.
\doi{10.1016/j.ymeth.2016.02.007}

\bibitem{DashtiEtal2018}
Ali Dashti, Danya {Ben Hail}, Ghoncheh Mashayekhi, Peter Schwander, Amed\'{e}e
  des Georges, Joachim Frank, and Abbas Ourmazd.
\newblock {Functional Pathways of Biomolecules Retrieved from Single-particle
  Snapshots}.
\newblock Technical report, 2018.
\doi{10.1101/291922}

\bibitem{CoifmanEtal2005}
Ronald~R. Coifman, St{\'e}phane Lafon, Ann~B. Lee, Mauro Maggioni, Boaz Nadler,
  Fred Warner, and Steven~W. Zucker.
\newblock {Geometric diffusions as a tool for harmonic analysis and structure
  definition of data: Diffusion maps}.
\newblock {\em Proc. Natl. Acad. Sci.}, 102(21):7426--7431, may 2005.
\doi{10.1073/pnas.0500334102}

\bibitem{CoifmanLafon2006}
Ronald~R. Coifman and St{\'e}phane Lafon.
\newblock {Diffusion maps}.
\newblock {\em Appl. Comput. Harmon. Anal.}, 21(1):5--30, jul 2006.
\doi{10.1016/j.acha.2006.04.006}

\bibitem{WangMahadevan2009}
Chang Wang and Sridhar Mahadevan.
\newblock {Manifold alignment without correspondence}.
\newblock {\em Int. Jt. Conf. Artif. Intell.}, 1:1273--1278, 2009.

\bibitem{CuiChangShanChen2014}
Zhen Cui, Hong Chang, Shiguang Shan, and Xilin Chen.
\newblock {Generalized Unsupervised Manifold Alignment}.
\newblock In {\em Neural Inf. Process. Syst.}, pages 2429--2437, 2014.

\bibitem{NakaneEtal2018}
Takanori Nakane, Dari Kimanius, Erik Lindahl, and Sjors~H.W. Scheres.
\newblock {Characterisation of molecular motions in cryo-EM single-particle
  data by multi-body refinement in RELION}.
\newblock {\em {eLife}}, 7:1--18, jun 2018.
\doi{10.7554/eLife.36861}

\bibitem{JinEtal2014}
Qiyu Jin, Carlos Oscar~S. Sorzano, Jos{\'{e}}~Miguel de~la Rosa-Trev{\'{i}}n,
  Jos{\'{e}}~Rom{\'{a}}n Bilbao-Castro, Rafael N{\'{u}}{\~{n}}ez-Ram{\'{i}}rez,
  Oscar Llorca, Florence Tama, and Slavica Joni{\'{c}}.
\newblock {Iterative Elastic 3D-to-2D Alignment Method Using Normal Modes for
  Studying Structural Dynamics of Large Macromolecular Complexes}.
\newblock {\em Structure}, 22(3):496--506, mar 2014.
\doi{10.1016/j.str.2014.01.004}

\bibitem{SchilbachEtal2017}
Sandra Schilbach, Merle Hantsche, Dmitry Tegunov, Christian Dienemann, Cristoph
  Wigge, Henning Urlaub, and Patrick Cramer.
\newblock {Structures of transcription pre-initiation complex with TFIIH and
  Mediator}.
\newblock {\em Nature}, 551(7679):204--209, nov 2017.
\doi{10.1038/nature24282}

\bibitem{SorzanoEtal2019}
Carlos Oscar~S. Sorzano, et~al.
\newblock {Survey of the analysis of continuous conformational variability of
  biological macromolecules by electron microscopy}.
\newblock {\em Acta Crystallogr. Sect. F Struct. Biol. Commun.}, 75(1):19--32,
  jan 2019.
\doi{10.1107/S2053230X18015108}

\bibitem{WangShkolniskySinger2013}
Lanhui Wang, Yoel Shkolnisky, and Amit Singer.
\newblock {A Fourier-based Approach for Iterative 3D Reconstruction from
  Cryo-EM Images}.
\newblock jul 2013.
\href{http://arxiv.org/abs/1307.5824}{arXiv:1307.5824}

\bibitem{DuttRokhlin1993}
Alok Dutt and Vladimir Rokhlin.
\newblock {Fast Fourier Transforms for Nonequispaced Data}.
\newblock {\em SIAM J. Sci. Comput.}, 14(6):1368--1393, nov 1993.
\doi{10.1137/0914081}

\bibitem{GreengardLee2004}
Leslie Greengard and June-Yub Lee.
\newblock {Accelerating the Nonuniform Fast Fourier Transform}.
\newblock {\em SIAM Rev.}, 46(3):443--454, jan 2004.
\doi{10.1137/S003614450343200X}

\bibitem{BelkinNiyogi2003}
Mikhail Belkin and Partha Niyogi.
\newblock {Laplacian Eigenmaps for Dimensionality Reduction and Data
  Representation}.
\newblock {\em Neural Comput.}, 15(6):1373--1396, jun 2003.
\doi{10.1162/089976603321780317}

\bibitem{Natterer2001}
Frank Natterer.
\newblock {\em {The Mathematics of Computerized Tomography}}.
\newblock Society for Industrial and Applied Mathematics, jan 2001.
\doi{10.1137/1.9780898719284}

\bibitem{GrebenkovNguyen2013}
Denis~S. Grebenkov and Binh~T. Nguyen.
\newblock {Geometrical Structure of Laplacian Eigenfunctions}.
\newblock {\em SIAM Rev.}, 55(4):601--667, jan 2013.
\doi{10.1137/120880173}

\bibitem{AflaloBrezisKimmel2015}
Yonathan Aflalo, Haim Brezis, and Ron Kimmel.
\newblock {On the Optimality of Shape and Data Representation in the Spectral
  Domain}.
\newblock {\em SIAM J. Imaging Sci.}, 8(2):1141--1160, jan 2015.
\doi{10.1137/140977680}

\bibitem{GreblickiPawlak1985}
Wlodzimier Greblicki and Miroslaw Pawlak.
\newblock {Fourier and Hermite series estimates of regression functions}.
\newblock {\em Ann. Inst. Stat. Math.}, 37(3):443, dec 1985.
\doi{10.1007/BF02481112}

\bibitem{LiaoHashemFrank2015}
Hstau~Y. Liao, Yaser Hashem, and Joachim Frank.
\newblock {Efficient Estimation of Three-Dimensional Covariance and its
  Application in the Analysis of Heterogeneous Samples in Cryo-Electron
  Microscopy}.
\newblock {\em Structure}, 23(6):1129--1137, jun 2015.
\doi{10.1016/j.str.2015.04.004}

\bibitem{HermanKalinowski2008}
Gabor~T. Herman and Miroslaw Kalinowski.
\newblock {Classification of heterogeneous electron microscopic projections
  into homogeneous subsets}.
\newblock {\em Ultramicroscopy}, 108(4):327--338, mar 2008.
\doi{10.1016/j.ultramic.2007.05.005}

\bibitem{ShatskyEtal2010}
Maxim Shatsky, Richard~J. Hall, Eva Nogales, Jitendra Malik, and Steven~E.
  Brenner.
\newblock {Automated multi-model reconstruction from single-particle electron
  microscopy data}.
\newblock {\em J. Struct. Biol.}, 170(1):98--108, apr 2010.
\doi{10.1016/j.jsb.2010.01.007}

\bibitem{Mallat2009}
St{\'e}phane Mallat.
\newblock {\em {A Wavelet Tour of Signal Processing}}.
\newblock Elsevier, 2009.
\doi{10.1016/B978-0-12-374370-1.X0001-8}

\bibitem{Kay1993}
Steven~M. Kay.
\newblock {\em {Fundamentals of Statistical Signal Processing: Estimation
  Theory}}.
\newblock Prentice-Hall, Upper Saddle River, NJ, 1993.

\bibitem{Vonluxburg2007}
Ulrike von Luxburg.
\newblock {A tutorial on spectral clustering}.
\newblock {\em Stat. Comput.}, 17(4):395--416, dec 2007.
\doi{10.1007/s11222-007-9033-z}

\bibitem{FriedmanBentleyFinkel1977}
Jerome~H. Friedman, Jon~Louis Bentley, and Raphael~Ari Finkel.
\newblock {An Algorithm for Finding Best Matches in Logarithmic Expected Time}.
\newblock {\em ACM Trans. Math. Softw.}, 3(3):209--226, sep 1977.
\doi{10.1145/355744.355745}

\bibitem{Stewart2001}
G.~W. Stewart.
\newblock {A Krylov--Schur Algorithm for Large Eigenproblems}.
\newblock {\em SIAM J. Matrix Anal. Appl.}, 23(3):601--614, jan 2002.
\doi{10.1137/S0895479800371529}

\bibitem{McQueenEtal2016}
James Mc{Q}ueen, Marina Meil{\u{a}}, Jacob Vander{P}las, and Zhongyue Zhang.
\newblock {Megaman: Scalable Manifold Learning in Python}.
\newblock {\em J. Mach. Learn. Res.}, 17(148):1--5, 2016.

\bibitem{OlsonSchroder2018}
Luke~N. Olson and Jacob~B. Schroder.
\newblock {{PyAMG}: Algebraic Multigrid Solvers in {Python} v4.0}, 2018.

\bibitem{BarnettEtal2019}
Alex~H. Barnett, Jeremy~F. Magland, and Ludvig af~Klinteberg.
\newblock {A parallel non-uniform fast Fourier transform library based on an
  "exponential of semicircle" kernel}.
\newblock {\em SIAM J. Sci. Comput. Press.}, pages 1--25, apr 2019.

\bibitem{GolubVanLoan2013}
Gene~H. Golub and Charles~F. {Van Loan}.
\newblock {\em {Matrix Computations}}.
\newblock Johns Hopkins University Press, Baltimore, MD, 4th edition, 2013.

\bibitem{TrefethenBau1997}
Lloyd~N. Trefethen and David Bau.
\newblock {\em {Numerical Linear Algebra}}.
\newblock SIAM, Philadelphia, PA, jan 1997.

\bibitem{VonluxburgBelkinBousquet2008}
Ulrike von Luxburg, Mikhail Belkin, and Olivier Bousquet.
\newblock {Consistency of spectral clustering}.
\newblock {\em Ann. Stat.}, 36(2):555--586, apr 2008.
\doi{10.1214/009053607000000640}

\bibitem{RosascoBelkinDevito2010}
Lorenzo Rosasco, Mikhail Belkin, and Ernesto {De Vito}.
\newblock {On Learning with Integral Operators}.
\newblock {\em J. Mach. Learn. Res.}, 11:905--934, 2010.

\bibitem{LeeIzbicki2016}
Ann~B. Lee and Rafael Izbicki.
\newblock {A spectral series approach to high-dimensional nonparametric
  regression}.
\newblock {\em Electron. J. Stat.}, 10(1):423--463, 2016.
\doi{10.1214/16-EJS1112}

\bibitem{NadlerLafonCoifmanKevrekidis2005}
Boaz Nadler, Stephane Lafon, Ronald~R. Coifman, and Ioannis~G. Kevrekidis.
\newblock {Diffusion Maps, Spectral Clustering and Eigenfunctions of
  Fokker--Planck operators}.
\newblock In {\em Neural Inf. Process. Syst.}, pages 955--962, 2005.

\bibitem{TingHuangJordan2010}
Daniel Ting, Ling Huang, and Michael Jordan.
\newblock {An Analysis of the Convergence of Graph Laplacians}.
\newblock In {\em Int. Conf. Mach. Learn.}, 2010.

\bibitem{HendersonEtal2012}
Richard Henderson, et~al.
\newblock {Outcome of the First Electron Microscopy Validation Task Force
  Meeting}.
\newblock {\em Structure}, 20(2):205--214, feb 2012.
\doi{10.1016/j.str.2011.12.014}

\bibitem{BelkinNiyogi2004}
Mikhail Belkin and Partha Niyogi.
\newblock {Semi-Supervised Learning on Riemannian Manifolds}.
\newblock {\em Mach. Learn.}, 56(1-3):209--239, jul 2004.
\doi{10.1023/B:MACH.0000033120.25363.1e}

\bibitem{ZhouSrebro2011}
Xueyuan Zhou and Nathan Srebro.
\newblock {Error Analysis of Laplacian Eigenmaps for Semi-supervised Learning}.
\newblock {\em Int. Conf. Artif. Intell. Stat.}, 15:901--908, 2011.

\bibitem{MoscovichJaffeNadler2017}
Amit Moscovich, Ariel Jaffe, and Boaz Nadler.
\newblock {Minimax-optimal semi-supervised regression on unknown manifolds}.
\newblock In {\em Int. Conf. Artif. Intell. Stat.}, volume~54, pages 933--942,
  apr 2017.

\bibitem{VilloutreixEtal2017}
Paul Villoutreix, Joakim And{\'e}n, Bomyi Lim, Hang Lu, Ioannis~G. Kevrekidis,
  Amit Singer, and Stanislav~Y. Shvartsman.
\newblock {Synthesizing developmental trajectories}.
\newblock {\em PLOS Comput. Biol.}, 13(9):1--15, sep 2017.
\doi{10.1371/journal.pcbi.1005742}

\bibitem{SingerWu2013}
Amit Singer and Hau-Tieng Wu.
\newblock {Two-Dimensional Tomography from Noisy Projections Taken at Unknown
  Random Directions}.
\newblock {\em SIAM J. Imaging Sci.}, 6(1):136--175, jan 2013.
\doi{10.1137/090764657}

\bibitem{Singer2006a}
Amit Singer.
\newblock {Spectral independent component analysis}.
\newblock {\em Appl. Comput. Harmon. Anal.}, 21(1):135--144, jul 2006.
\doi{10.1016/j.acha.2006.03.003}

\bibitem{LedermanSinger2017}
Roy~R. Lederman and Amit Singer.
\newblock {Continuously heterogeneous hyper-objects in cryo-EM and 3-D movies
  of many temporal dimensions}.
\newblock 2017.
\href{http://arxiv.org/abs/1704.02899}{arXiv:1704.02899}

\bibitem{LedermanAndenSinger2019}
Roy~R. Lederman, Joakim And{\'e}n, and Amit Singer.
\newblock {Hyper-Molecules: on the Representation and Recovery of Dynamical
  Structures, with Application to Flexible Macro-Molecular Structures in
  Cryo-{EM}. \textit{Inverse problems, accepted.}}
\newblock 2019.

\bibitem{Chimera2004}
Eric~F. Pettersen, Thomas~D. Goddard, Conrad~C. Huang, Gregory~S. Couch,
  Daniel~M. Greenblatt, Elaine~C. Meng, and Thomas~E. Ferrin.
\newblock {UCSF Chimera---A visualization system for exploratory research and
  analysis}.
\newblock {\em J. Comput. Chem.}, 25(13):1605--1612, oct 2004.
\doi{10.1002/jcc.20084}

\end{thebibliography}

\appendix

\section[Appendix: Proofs]{Proofs} \label{sec:proofs}

Before proving Theorem \ref{thm:specvolconvergence}, we require a technical lemma that bounds the variance of linear combinations of random variables.
\begin{lemma} \label{lemma:VarianceConvexCombination}
    Fix $n$ and let $Z_1, \ldots, Z_n$ be i.i.d. random variables with finite variance.
    Let $W_1 \ldots, W_n$ be identically distributed, but possibly dependent, random weights.
    Denote the (unnormalized) sample moments by
    \begin{eqnarray}
        \sum_{s=1}^n W_s = m_1
        \qquad
        \sum_{s=1}^n W_s^2 = m_2.
    \end{eqnarray}
    If the weights $W_1, \ldots, W_n$ are independent of $Z_1, \ldots, Z_n$, we have
    \begin{eqnarray}
        \Var
        \left(
            \sum_{s=1}^n
            W_s Z_s
        \right)
        \le
        m_2 \Expect[Z_1^2].
    \end{eqnarray}
\end{lemma}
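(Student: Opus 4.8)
The plan is to condition on the weights. Since the bound $m_2 \Expect[Z_1^2]$ involves the random quantity $m_2 = \sum_{s=1}^n W_s^2$, I read the left-hand side as the variance of $\sum_s W_s Z_s$ computed with the weights held fixed, i.e. the conditional variance given $W_1, \ldots, W_n$. The independence of the weights from $Z_1, \ldots, Z_n$ is exactly what lets me treat this conditional problem as a weighted sum of the original i.i.d. variables with deterministic coefficients.

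First I would expand the conditional variance as a double sum of covariances,
\begin{eqnarray}
    \Var\left(\sum_{s=1}^n W_s Z_s \,\Big|\, W\right)
    =
    \sum_{s=1}^n \sum_{t=1}^n W_s W_t \, \Cov(Z_s, Z_t \mid W).
\end{eqnarray}
By the assumed independence of the weights from the $Z$'s, conditioning on $W$ does not alter the joint law of $(Z_1, \ldots, Z_n)$, so $\Cov(Z_s, Z_t \mid W) = \Cov(Z_s, Z_t)$. Because the $Z_s$ are i.i.d., the off-diagonal covariances vanish and the diagonal ones all equal $\Var(Z_1)$, giving $\Cov(Z_s, Z_t \mid W) = \delta_{s,t}\Var(Z_1)$.

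Substituting this into the double sum collapses it onto the diagonal and factors out the common variance:
\begin{eqnarray}
    \Var\left(\sum_{s=1}^n W_s Z_s \,\Big|\, W\right)
    =
    \Var(Z_1) \sum_{s=1}^n W_s^2
    =
    m_2 \, \Var(Z_1).
\end{eqnarray}
The proof then concludes by the elementary inequality $\Var(Z_1) = \Expect[Z_1^2] - (\Expect[Z_1])^2 \le \Expect[Z_1^2]$, which yields $m_2 \Var(Z_1) \le m_2 \Expect[Z_1^2]$, the claimed bound.

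The one genuine subtlety — and the step I would treat most carefully — is the role of the dependence among the weights. The statement explicitly allows $W_1, \ldots, W_n$ to be dependent and identically distributed, which in an unconditional computation would introduce cross terms $\Expect[W_s W_t]$ with $s \ne t$. The key point is that once we condition on the weights their joint dependence structure is frozen, and what remains is a linear combination of independent $Z_s$ with deterministic coefficients, whose cross covariances are identically zero. Thus the weight dependence never enters the bound, and the only weight-dependent quantity that survives is the second sample moment $m_2$; the first sample moment $m_1$, although defined in the statement, plays no role in this particular inequality.
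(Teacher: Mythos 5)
Your proof is correct, but it takes a different route from the paper's. The paper works with the \emph{unconditional} variance: it expands $\Expect[(\sum_s W_s Z_s)^2]$ into $\sum_{s,t}\Expect[W_sW_t]\Expect[Z_sZ_t]$ (using independence of weights and data to factor each term), splits the double sum into diagonal and off-diagonal parts, bounds the off-diagonal cross terms by $m_1^2(\Expect[Z_1])^2$ via the first-moment constraint so that they cancel against the squared mean, and is left with $\Expect[Z_1^2]\sum_s\Expect[W_s^2]\le m_2\Expect[Z_1^2]$. You instead condition on the weights, observe that independence freezes the joint law of the $Z_s$ so the cross covariances vanish outright, and obtain the exact identity $\Var(\sum_s W_sZ_s\mid W)=m_2\Var(Z_1)$ before relaxing $\Var(Z_1)\le\Expect[Z_1^2]$. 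Your version is cleaner and in one respect more honest: when $m_2$ is genuinely random (as it is in the applications inside the proof of Theorem~\ref{thm:specvolconvergence}, where $m_2=\sum_s(\geigs^{(\ell)})^4$), the conditional reading is the only one under which the stated inequality is well posed, and you dispense with $m_1$ entirely. The one point of divergence worth recording is that your bound controls the conditional variance, whereas the lemma as written (and as the paper proves it, implicitly treating $m_1$ and $m_2$ as deterministic constraints) refers to the unconditional variance; the two differ by $\Var\bigl(\Expect[\sum_s W_sZ_s\mid W]\bigr)=(\Expect[Z_1])^2\Var(m_1)$, which vanishes when $m_1$ is deterministic. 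If the unconditional statement is intended, append one line invoking the law of total variance under that hypothesis; otherwise your argument stands as is.
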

\begin{proof}
    \ By definition,
    \begin{eqnarray} \label{eq:VarSumWiZi}
        \Var
        \left(
            \sum_{s=1}^n
            W_s Z_s
        \right)
        &=&
        \Expect
        \left[
            \left(
                \sum_{s=1}^n W_s Z_s
            \right)^2
        \right]
        -
        \left(
        \Expect
        \left[
            \sum_{s=1}^n W_s Z_s
        \right]
        \right)^2.
    \end{eqnarray}
    Since $W_s$ and $Z_s$ are independent, we can rewrite the second term, which yields
    \begin{eqnarray}
        \left(
            \Expect
            \left[
                \sum_{s=1}^n W_s Z_s
            \right]
        \right)^2
        =
        \left(
            \sum_{s=1}^n
            \Expect[W_s]
            \Expect[Z_s]
        \right)^2
        =
        m_1^2
        \left(
            \Expect [Z_1]
        \right)^2.
    \end{eqnarray}
    Similarly, $W_s W_t$ is independent of $Z_1, \ldots, Z_n$, so we may break up the expectations in the first term of \eref{eq:VarSumWiZi}.
    We then split the double sum into a sum over index pairs $s=t$ and a sum over $s \neq t$, obtaining
    \begin{eqnarray}
        \Expect
        \left[
            \left(
                \sum_{s=1}^n W_s Z_s
            \right)^2
        \right]
        &=&
        \sum_{s,t=1}^n
        \Expect[W_s W_t] \Expect[Z_s Z_t]
        \\
        &=&
        \sum_{s=1}^n \Expect[W_s^2] \Expect[Z_s^2]
        +
        \sum_{t=1}^n \sum_{s \neq t} \Expect[W_s W_t] \Expect[Z_s Z_t]
        \\
        &=&
        \Expect[Z_1^2] \sum_{s=1}^n \Expect[W_s^2]
        +
        \left( \Expect[Z_1] \right)^2 \sum_{t=1}^n \sum_{s \neq t} \Expect[W_s W_t].
    \end{eqnarray}
    The second term may be bounded by the constraint $\sum_{s=1}^n W_s = m_1$, since
    \begin{eqnarray}
         \hspace{-10mm}
         \left(\Expect[Z_1] \right)^2
         \Expect \left[ \sum_{t=1}^n \sum_{s \neq t} W_s W_t \right]
         \le
         \left(\Expect[Z_1] \right)^2
         \Expect \left[ \left( \sum_{s=1}^n W_s \right)^2 \right]
         =
         \left(\Expect[Z_1] \right)^2 m_1^2.
    \end{eqnarray}
    Putting it all together and incorporating the second moment constraint, we obtain
    \begin{eqnarray}
        \Var
        \left(
            \sum_{s=1}^n
            W_s Z_s
        \right)
        \le
        \Expect[Z_s^2]\sum_{s=1}^n \Expect[W_s^2]
        \le
        m_2 \Expect[Z_1^2]. 
    \end{eqnarray}
\end{proof}
\noindent \textbf{Proof of Theorem \ref{thm:specvolconvergence}.}

\noindent We will prove the convergence of the solution to \eref{eq:DiffvolEquation} by proving that both $\wtsrhsest$ and $\wtskerest$ converge in probability as $n \to \infty$.
We start by computing the expectation of $\wtsrhsest$.
\begin{eqnarray} 
    \Expect[\wtsrhsest^{(\ell)}]
    &=&
    \Expect
    \left[
        \frac{1}{\sqrt{n}}
        \sum_{s=1}^n
        \geigs^{(\ell)} \proj_s^\transp \im_s^{}
    \right]\\
    &=&
    \sum_{s=1}^n \frac{1}{\sqrt{n}} \Expect[ \geigs^{(\ell)} \proj_s^\transp \proj_s^{} \vol_s^{}  + \geigs^{(\ell)} \proj_s^\transp \noise_s^{}] \qquad  \mathrm{(By\ \eqref{eq:ForwardModel})} \\
    &=&
    \sum_{s=1}^n \frac{1}{\sqrt{n}} \Expect[ \geigs^{(\ell)} \proj_s^\transp \proj_s^{} \vol_s^{} ] + 0 \\
    &=&
    \Expect[\proj^\transp \proj^{}]
    \Expect
    \left[
        \frac{1}{\sqrt{n}}
        \sum_{s=1}^n
        \geigs^{(\ell)} \vol_s^{}
    \right] \qquad \mathrm{(By\ Assumption\ \ref{assump:IndependenceDiffcoordsProjections})}.
\end{eqnarray}
Now consider the variance  of the $i$\textsuperscript{th} element of the vector \( \wtsrhsest^{(\ell)} \),
\begin{eqnarray} \label{eq:VarWeightedBackproj}
    \Var({\bf e}_i^\transp \wtsrhsest^{(\ell)})
    =
    \Var
    \left(
        \frac{1}{\sqrt{n}}
        \sum_{s=1}^n \geigs^{(\ell)} {\bf e}_i^\transp \proj_s^\transp \im_s^{}
    \right).
\end{eqnarray}
%We would like to claim that the standard deviation is negligible with respect to the expectation.
%The standard approach would be to claim that it is a sum of independent variables, hence equal to the sum of variances.
%However, recall that the coordinates \( \geig^{(\ell)}_1, \ldots, \geig^{(\ell)}_n \) form a normalized eigenvector of the graph Laplacian,
% hence are statistically dependent. For example they satisfy the normalization condition
%\( \| (\geig^{(\ell)}_1, \ldots, \geig^{(\ell)}_n ) \|^2 = 1\).
We apply Lemma \ref{lemma:VarianceConvexCombination} with $W_s = \geigs^{(\ell)}$ and $Z_s = {\bf e}_i^\transp \proj_s^\transp \im_s^{}$ to obtain
\begin{eqnarray}
    \Var({\bf e}_i^\transp \wtsrhsest^{(\ell)}) <  \frac{1}{n} \Expect[Z_1^2].
\end{eqnarray}
We now compute the expectation and variance of the matrix \( \wtskerest \).
\begin{eqnarray}
    \Expect
    \left[
        \wtskerest^{(\ell, m)}
    \right]
    &=&
    \sum_{s=1}^n
    \Expect
    \left[
    \geigs^{(\ell)}
    \geigs^{(m)}
    \proj_s^\transp
    \proj_s^{}
    \right]
    \\
    &=&
    \sum_{s=1}^n
    \Expect[\geigs^{(\ell)} \geigs^{(m)}]
    \Expect[\proj_s^\transp \proj_s^{}]
    \qquad \mathrm{(By\ Assumption\ \ref{assump:IndependenceDiffcoordsProjections})}
    \\
    &=&
    \Expect
    \left[
        \sum_s
        \geigs^{(\ell)} \geigs^{(m)}
    \right]
    \Expect[\proj^\transp \proj^{}]
    =
    \delta_{\ell,m} \Expect[\proj^\transp \proj^{}].
\end{eqnarray}
For the variance, we compute the variance of a single entry \( \wtskerest^{(\ell, m)}_{i,j} = \mathbf{e}_i^\transp \wtskerest^{(\ell, m)}
\mathbf{e}_j\).

\noindent \textbf{Case 1: $\ell=m$}
\begin{eqnarray}
    \Var
    \left(
        \mathbf{e}_i^\transp \wtskerest^{(\ell, \ell)} \mathbf{e}_j
    \right)
    &=&
    \Var
    \left(
        \sum_{s=1}^n
        \left( \geigs^{(\ell)} \right)^2
        \mathbf{e}_i^\transp
        \proj_s^\transp
        \proj_s^{}
        \mathbf{e}_j^{}
    \right).
\end{eqnarray}
Let $W_s = \left( \geigs^{(\ell)} \right)^2$ and $Z_s = {\bf e}_i^\transp \proj_s^\transp \proj_s^{} {\bf e}_j$.
By Assumption \ref{assump:EigenvectorBound}, \( \sum_{s=1}^n W_s^2 \to 0\).
Since $Z_s$ has finite variance, we apply Lemma \ref{lemma:VarianceConvexCombination} to obtain that as $n$ tends to infinity, \( \Var(\mathbf{e}_i^\transp \wtskerest^{(\ell, \ell)} \mathbf{e}_j) \to 0 \) in probability.

\noindent \textbf{Case 2: $\ell \neq m$}
\begin{eqnarray}
    \Var
    \left(
        \mathbf{e}_i^\transp \wtskerest^{(\ell, m)} \mathbf{e}_j
    \right)
    &=&
    \Var
    \left(
        \sum_{s=1}^n
        \geigs^{(\ell)}
        \geigs^{(m)}
        \mathbf{e}_i^\transp
        \proj_s^\transp
        \proj_s^{}
        \mathbf{e}_j^{}
    \right).
\end{eqnarray}
By Cauchy--Schwarz and Assumption \ref{assump:EigenvectorBound}, the following converges in probability:
\begin{eqnarray}
    \sum_{s=1}^n
    \left(
        \geigs^{(\ell)}
    \right)^2
    \left(
        \geigs^{(m)}
    \right)^2
    \le
    \sqrt{\sum_{s=1}^n \left( \geigs^{(\ell)} \right)^4}
    \sqrt{\sum_{s=1}^n \left( \geigs^{(m)} \right)^4}
    \to
    0.
\end{eqnarray}   
Again, we apply Lemma \ref{lemma:VarianceConvexCombination}. This time with $W_s = \geigs^{(\ell)}
\geigs^{(m)}$ and $Z_s = {\bf e}_i^\transp
\proj_s^\transp \proj_s^{} {\bf e}_j^{}$ to obtain that \( \Var(\mathbf{e}_i^\transp \wtskerest^{(\ell, m)} \mathbf{e}_j) \to 0 \) in probability.
To summarize, we  proved the following results:
\begin{eqnarray} \label{eq:bEll}
    \wtsrhsest^{(\ell)}
    &=&
    \Expect[\proj^\transp \proj^{}]
    \Expect
    \left[
        \frac{1}{\sqrt{n}}
        \sum_{s=1}^n \geigs^{(\ell)} \vol_s^{}
    \right]
    +
    \bigOprob \left( \frac{1}{\sqrt{n}} \right)
    \\
    \wtskerest^{(\ell, m)}
    &\to&
    \delta_{\ell,m}\Expect[\proj^\transp \proj^{}] \quad \mathrm{In\ probability}. \label{eq:Klm}
\end{eqnarray}
By \eref{eq:DiffvolEquation}, the vector of spectral volumes satisfies
\(
    \dvol = \wtskerest^{-1} \wtsrhsest.
\)
Denote \( \wtskerest = \Expect[\wtskerest] + \Delta \wtskerest \) and \( \wtsrhsest = \Expect[\wtsrhsest] + \Delta \wtsrhsest \).
Expanding \( \wtskerest^{-1} \) to first order,
\begin{eqnarray}
    \wtskerest^{-1} = \Expect[\wtskerest]^{-1} + \Expect[\wtskerest]^{-1} \, \Delta \wtskerest \, \Expect[\wtskerest]^{-1} + \bigO(\| \Delta \wtskerest\|^2).
\end{eqnarray}
Since $\Delta \wtskerest \to 0$ and $\Delta {\bf b} = O_P(1/\sqrt{n})$, the spectral volumes satisfy,
\begin{eqnarray}
    \dvol = (\Expect[\wtskerest] + \Delta \wtskerest)^{-1} (\Expect[\wtsrhsest] + \Delta \wtsrhsest) =  \Expect[\wtskerest]^{-1} \Expect[\wtsrhsest] + \bigOprob(1/\sqrt{n}).
\end{eqnarray}
Plugging in Equations \eqref{eq:bEll} and \eqref{eq:Klm}, we obtain
\begin{eqnarray}
    \dvolest^{(\ell)}
    =
    \Expect
    \left[
        \frac{1}{\sqrt{n}}
        \sum_{s=1}^n
        \geigs^{(\ell)} \vol_s
    \right]
    +
    \bigOprob \left( \frac{1}{\sqrt{n}} \right) .
\end{eqnarray}
\( \hfill \square \)

\noindent \textbf{Proof of Theorem \ref{thm:HighResConvergence}.}
It follows from Corollary \ref{cor:dvolestconvergence} and Assumption \ref{assump:CoefficientDecay} that
\begin{eqnarray}
    \fl \quad\,
    \dvolest^{(\ell)}
    =
    \Expect[\meig^{(\ell)}( \coords(\vol) + O_P(1)) \vol] + O_P(1/\sqrt{n})
    \\
    \fl \qquad\quad=
    \Expect \left[ \meig^{(\ell)}(\coords(\vol) + O_P(1)) \left( O(h(r)) + \sum_{m=0}^{r-1} \dvol^{(m)} \meig^{(m)}(\coords(\vol))  \right) \right] + O_P \left(\frac{1}{\sqrt{n}}\right)     \\
\label{eq:DvolestConvDvol}
\end{eqnarray}
Note that $\meig^{(\ell)}$ is a smooth function on a compact domain and therefore
its derivatives are
bounded. Hence \( \meig^{(\ell)}(\coords(\vol) + O_P(1)) = \meig^{(\ell)}(\coords(\vol)) + O_P(1) \).
It follows that,
\begin{eqnarray}
    \fl \quad\,
    \dvolest^{(\ell)}
    =
    \Expect \left[ O_P(h(r)) + \sum_{m=0}^{r-1} \dvol^{(m)} \meig^{(\ell)}(\coords(\vol))\meig^{(m)}(\coords(\vol))
  \right] + O_P \left(\frac{1}{\sqrt{n}}\right)
    \\
    \fl \qquad \quad=
    \dvol^{(\ell)} + O_P(h(r)). \qquad \mathrm{(by\ } \Expect[ \meig^{(\ell)} \meig^{(m)} ] = \delta_{\ell,m} \mathrm{)}
\end{eqnarray}
By the definition of the high-resolution reconstructions \eref{eq:highresvolest} we now have
\begin{eqnarray}
    \volest_s
    =
    \sqrt{n}\sum_{\ell=0}^{r-1} \geigs^{(\ell)} \dvolest^{(\ell)}
    =
    \sqrt{n}\sum_{\ell} \geigs^{(\ell)} (\dvol^{(\ell)} + O_P(h(r))).
\end{eqnarray}
By Assumption \ref{assump:eigenconvergence} we may rewrite this as
\begin{eqnarray} \label{eq:HighQReconApprox}
    \volest_s
    =
    \sum_{\ell} (\meig^{(\ell)}(\estcoords_s) + o_P(1))(\dvol^{(\ell)} + O_P(h(r))).
\end{eqnarray}
By Assumption \ref{assump:PCACoordsConvergence} we have
\( 
    \meig^{(\ell)}(\estcoords_s)
    =
    \meig^{(\ell)}(\coords(\vol_s) + O_P(1))
    =
    \meig^{(\ell)}(\coords(\vol_s)) + O_P(C_\ell),
\)
where the last equality stems from the fact that $\meig^{(\ell)}$ is a smooth function on a compact domain and therefore its derivatives are
bounded.
We plug this back into \eref{eq:HighQReconApprox},
\begin{eqnarray}
    \volest_s
    &=
    \sum_{\ell} (\meig^{(\ell)}(\coords(\vol_s)) + O_P(C_\ell))(\dvol^{(\ell)} + O_P(h(r)))
    \\
    &=
    \sum_{\ell} \meig^{(\ell)}(\coords(\vol_s))\dvol^{(\ell)} +  \sum_{\ell} O_P(C_\ell)\dvol^{(\ell)}
    \\
    &+ O_P(h(r)) \sum_{\ell} \meig^{(\ell)}(\coords(\vol_s))
    + \sum_\ell O_P(C_\ell) O_P(h(r)).
\end{eqnarray}
We conclude the proof by reusing Assumption \ref{assump:CoefficientDecay} on the first term.
\( \hfill \square \)

\section[Appendix: 3D Clock]{3D Clock} \label{sec:3DClock}

\begin{figure}[ht]
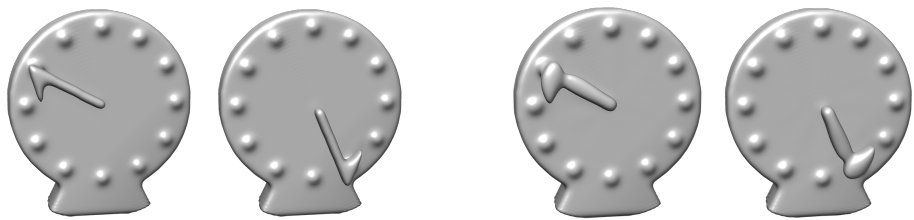

        \qquad\qquad\qquad
        \includegraphics[height=32mm]{\detokenize{orig_clock_1_crop}}
        \includegraphics[height=32mm]{\detokenize{orig_clock_25_crop}}\qquad\quad
        \includegraphics[height=32mm]{\detokenize{recon_clock_1_crop}}
        \includegraphics[height=32mm]{\detokenize{recon_clock_10000_crop}}
        \caption{\small \label{fig:3DClockRecon}
                (left) Two conformations of the clock model; (right) Their reconstructions using $r = 7$ spectral volumes.
        }
\end{figure}
\begin{figure}[ht]
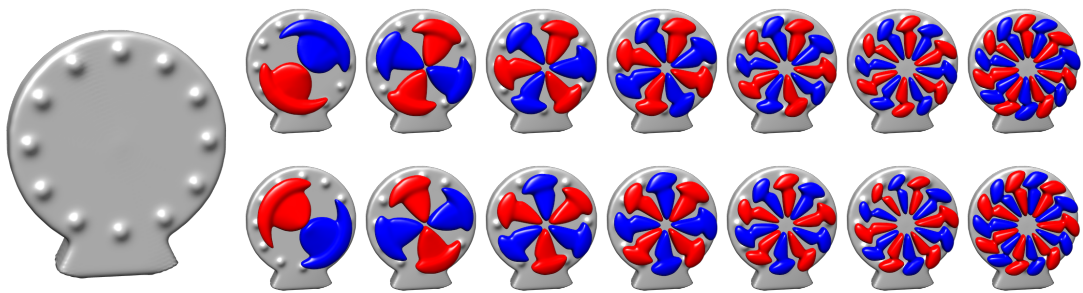

\qquad\quad
\begin{minipage}{0.20\linewidth}
    \includegraphics[width=32mm]{\detokenize{spec_vol_0}}
\end{minipage}
\begin{minipage}{0.78\linewidth}
    \setlength\tabcolsep{0pt}
        \begin{tabular}{ccccccc}
                \includegraphics[width=16mm]{\detokenize{spec_vol_1}}
                &
                \includegraphics[width=16mm]{\detokenize{spec_vol_3}}
                &
                \includegraphics[width=16mm]{\detokenize{spec_vol_5}}
                &
                \includegraphics[width=16mm]{\detokenize{spec_vol_7}}
                &
                \includegraphics[width=16mm]{\detokenize{spec_vol_9}}
                &
                \includegraphics[width=16mm]{\detokenize{spec_vol_11}}
                &
                \includegraphics[width=16mm]{\detokenize{spec_vol_13}}
                \\
                \includegraphics[width=16mm]{\detokenize{spec_vol_2}}
                &
                \includegraphics[width=16mm]{\detokenize{spec_vol_4}}
                &
                \includegraphics[width=16mm]{\detokenize{spec_vol_6}}
                &
                \includegraphics[width=16mm]{\detokenize{spec_vol_8}}
                &
                \includegraphics[width=16mm]{\detokenize{spec_vol_10}}
                &
                \includegraphics[width=16mm]{\detokenize{spec_vol_12}}
                &
                \includegraphics[width=16mm]{\detokenize{spec_vol_14}}

        \end{tabular}
\end{minipage}
        \caption{\small  \label{fig:3DClockSpecVols}
                Surface plots of the 3D clock spectral volumes, shown superimposed on $\dvol^{(0)}$ to aid in context.  Grey is $\dvol^{(0)}$, red
and blue are correspond to negative and positive portions of the higher order spectral volume, respectively.}
\end{figure}

\noindent This appendix presents the 3D analog to the 2D clock simulation.
Here the spatial resolution is $\imsize = 256$ and the number of volumes is $n = 10^5$.
Clock hand angles were drawn uniformly at random from the circle and the viewing orientations were drawn at uniformly from $\SO(3)$.
No noise was added and no CTF was applied in order to test the behavior under ideal conditions.
The covariance estimation method was run with $q = 8$ components in order to generate the adjacency matrix.
We then constructed a symmetric normalized graph Laplacian and performed reconstructions with $r = 15$ spectral volumes.
Figure \ref{fig:3DClockSpecVols} shows the spectral volumes for the 3D clock. As can be seen, the spectral volumes here resemble very closely
those from Section \ref{fig:clock2dspecvols}, limited to the region that the clock hand rotates in.
The zeroth spectral volume looks like the mean volume whereas higher order spectral volumes come in pairs of increasing angular
frequency.

\end{document}